\providecommand{\tabularnewline}{\\}
\providecommand{\algorithmname}{Algorithm}
\numberwithin{equation}{section}
\numberwithin{figure}{section}
\theoremstyle{plain}
\newtheorem{thm}{\protect\theoremname}[section]
  \theoremstyle{definition}
  \newtheorem{defn}[thm]{\protect\definitionname}
  \theoremstyle{definition}
  \newtheorem{example}[thm]{\protect\examplename}
  \theoremstyle{plain}
  \newtheorem{prop}[thm]{\protect\propositionname}
  \theoremstyle{remark}
  \newtheorem{rem}[thm]{\protect\remarkname}
  \providecommand{\definitionname}{Definition}
  \providecommand{\examplename}{Example}
  \providecommand{\propositionname}{Proposition}
  \providecommand{\remarkname}{Remark}
\providecommand{\theoremname}{Theorem}
\begin{document}
\medskip{}

\title{Fully Polynomial Time Approximation Schemes (FPTAS) for some counting problems}

\author{Tzvi Alon\medskip{}
}

\maketitle
\textsc{This thesis was carried out under the supervision of dr. Nir Halman}

\medskip{}

\pagebreak{}
\begin{abstract}
In this thesis we develop FPTASs for the counting problems of $m-$tuples, contingency tables with two rows, and
0/1 knapsack. For the problem of counting $m-$tuples, we design two algorithms, one is strongly polynomial. As
far as we know, these are the first FPTASs for this problem. For the problem of counting contingency tables we improve
significantly over the running time of existing algorithms. For the problem of counting 0/1 knapsack solutions,
we design a simple strongly polynomial algorithm, with similar running times to the existing algorithms.

Our results are derived by using, as well as expanding, the method of $K-$approximation sets and functions introduced
in \cite{key-5}.
\end{abstract}

\section*{Acknowledgments}

I would like to thank my advisor, Nir Halman, for his generous time and commitment. His guidance and endless patience
helped me direct myself to a fruitful thinking, and I am grateful for that.

I wish to thank my dear parents, my wife Aderet for her support and understanding, and to my sweet children Tuvya,
Tzuri and Tziyon.

\section{Introduction}

\subsubsection{Organization of this thesis:}

In Section \ref{sub:The-problems} we define the problems under consideration. These problems can be solved via
dynamic programming, a technique which we survey in Section \ref{sub:Dynamic-programming}. In Section \ref{sub:Approximation-algorithms}
we survey approximation algorithms in general, and in Section \ref{sub:approximation-sets-and} we survey the method
of $K-$approximation sets and function, a specific method used to derive approximation algorithms. In Sections
\ref{sec:Counting-tuples}, \ref{sec:contingency-tables-with} , \ref{sec:0/1-knapsack--strongly} and \ref{sec:Strongly-polynomial-m-tuples}
we develop the approximation algorithms for our problems.

\subsubsection{Notations:}
\begin{itemize}
\item Denote by$\mathbb{N}$ the set of natural numbers, i.e. $\mathbb{N}=\left\{ 1,2,3,\ldots\right\} $.
\item Let $\mathbb{Z}^{+}$ be $\mathbb{N}\bigcup\left\{ 0\right\} $.
\item Let $\log z$ be the base 2 logarithm of $z$.
\item For a nondecreasing function $\varphi:\left\{ A,\ldots,B\right\} \rightarrow\mathbb{R}$ let $\varphi^{\mbox{max}}=\varphi\left(B\right)$
\end{itemize}

\subsection{The problems\label{sub:The-problems}}

For every optimization problem (or decision problem) there is a corresponding counting problem, in which we count
the number of feasible solutions (it is important to emphasize that we do not count the number of \emph{optimal}
solutions, but the number of \emph{feasible} solutions). 

In this thesis we deal with the following three problems, which are known to belong to the class of \#P-hard problems-
a class of counting problems which cannot be solved in a polynomial time unless P=NP (\cite{key-7} p. 167).\\

\textbf{$m-$tuples (\cite{key-7} p. 225 {[}SP21{]}):}

\emph{Input: }Sets $X_{1},\ldots,X_{m}\subseteq\mathbb{Z}^{+}$, $K,B\in\mathbb{N}$. For all $1\leq i\leq m$ and
$1\leq j\leq\left|X_{i}\right|$ denote by $x_{ij}$ the $j$'th element of the set $X_{i}$.

\emph{Decision problem:} Are there $K$ or more $m-$tuples $\left(x_{1\ell_{1}},...,x_{m\ell_{m}}\right)$ for
which $\sum_{k=1}^{m}x_{k\ell_{k}}\geq B$ ?

\emph{Counting problem: } How many $m-$tuples $\left(x_{1\ell_{1}},...,x_{m\ell_{m}}\right)$ are there such that
$\sum_{k=1}^{m}x_{k\ell_{k}}\geq B$ ?

The problem belongs to the class of \#P-hard problems (\cite{key-7} p. 225 {[}SP21{]}).

The input for the counting problem does not include $K$, therefore the input size for this problem is $O\left(\log B+\sum_{i=1}^{m}\sum_{j=1}^{\left|X_{i}\right|}\log x_{ij}\right)$.\\

\textbf{Contingency tables(\cite{key-2}):}

\emph{Input:} $r=\left(r_{1},...,r_{m}\right),\, s=\left(s_{1},...,s_{n}\right)$,$N$, all belongs to $\mathbb{N}$,
such that $r,s$ are partitions of $N$.

\emph{Counting problem : }The set $\Sigma_{s,r}$ of contingency tables with row sums $s$ and column sums $r$
is defined by 
\[
\Sigma_{r,s}=\left\{ Z\in\mathbb{Z}^{+^{m\times n}}:\sum_{j=1}^{n}Z_{ij}=r_{i}\mbox{ for }1\leq i\leq m,\sum_{i=1}^{m}Z_{ij}=s_{j}\mbox{ for \ensuremath{1\leq j\leq n}}\right\} 
\]
what is the cardinality of $\Sigma_{r,s}$?

The problem belongs to the class of \#P-hard problems, see Theorem 1 in \cite{key-8}.

The input for the counting problem does not include $K$, therefore the input size for this problem is $O\left(\log N+\sum_{i=1}^{m}\log r_{i}+\sum_{i=1}^{n}\log s_{i}\right)$.\\

\textbf{0/1 knapsack (\cite{key-7} p. 247 {[}MP9{]}):}

\emph{Input:} $w_{1},\ldots,w_{n}\in\mathbb{N}$ weights of $n$ items, $v_{1},\ldots,v_{n}$ the values of this
items, $C\in\mathbb{N}$ the capacity of the knapsack, and $K\in\mathbb{N}$.

\emph{Decision problem:} Is there a subset $S\subseteq\left\{ 1,\ldots,n\right\} $ such that $\sum_{s\in S}w_{s}\leq C$
and such that $\sum_{s\in S}v_{s}\geq K$?

\emph{Counting problem (\cite{key-3})}\textbf{: }What is the cardinality of $s_{n}\left(C\right):=\#\left\{ S\subseteq\left\{ 1,...,n\right\} |\sum_{k\in S}w_{k}\leq C\right\} $?

The problem belongs to the class of \#P-hard problems (Section 1 in \cite{key-10}).

The input for the counting problem does not include $v_{1},\ldots,v_{n},K$, therefore the input size for this problem
is $O\left(\log C+\sum_{i=1}^{n}\log w_{i}\right)$.

\subsection{Dynamic programming (DP)\label{sub:Dynamic-programming}}

Dynamic programming is a method used for solving a complex problems which can be broken down into a collection of
simpler subproblems, sharing a similar structure as the main problem. Each of these subproblems is solved only once,
and in the next time the same subproblem occurs, instead of recomputing it, one can just use the solution already
computed. Hopefully, this way we can reduce the amount of memory and computations needed to solve the problem\cite{key-9}.

To demonstrate DP, we show now in detail the DP formulation for counting 0/1 knapsack. Let $w_{1},..,w_{n},C$ be
an instance of 0/1 knapsack. Let 
\begin{equation}
s_{i}\left(j\right)=\#\left\{ S\subseteq\left\{ 1,...,i\right\} |\sum_{k\in S}w_{k}\leq j\right\} \label{eq:knapsack set}
\end{equation}
We want to evaluate $s_{n}\left(C\right)$. We first consider the boundary conditions: For $j<0$, we have $s_{i}\left(j\right)=0$
for every $i$. For $i=0$ we have $s_{0}\left(j\right)=1$ for every $j$ (we fill the knapsack with no items,
so there is only the empty solution).

For $i=1,\ldots,n$ the following recursion is valid:

\begin{equation}
s_{i}\left(j\right)=s_{i-1}\left(j\right)+s_{i-1}\left(j-w_{i}\right)\label{eq:knapsack dynamic}
\end{equation}

\emph{Explanation:} We consider two cases:

Case 1: the $i$'th item is in the knapsack. The remaining capacity in the knapsack for items $1,\ldots,i-1$ is
therefore $j-w_{i}$, so there are $s_{i-1}\left(j-w_{i}\right)$ solutions.

Case 2: the $i$'th item is not in the knapsack. Then the capacity for items $1,\ldots,i-1$ is $j$, and for this
there are $s_{i-1}\left(j\right)$ solutions.

We aim to calculate the value $s_{n}\left(C\right)$. For this, we start calculating $s_{0}\left(\cdot\right)$,
continue with $s_{1}\left(\cdot\right)$ by recursion (\ref{eq:knapsack dynamic}), and so on until we get to $s_{n}\left(\cdot\right)$.

How many operations are done in this calculation? For every iteration $i$ we need to calculate the $C$ values
$\left\{ s_{i}\left(1\right),\ldots,s_{i}\left(C\right)\right\} $. We have $n$ iterations, so the number of values
we calculate sum up to $nC$. Since in each calculation of a value we use $O\left(1\right)$ operations, the running
time to compute $s_{n}\left(C\right)$ is $O\left(nC\right)$.

Now, at first sight this running time looks polynomial in the input size. But, since numbers are stored in the computer
in binary encoding, the number of bits used to store $C$ is $O\left(\log C\right)$. Therefore, the running time
of the above algorithm is in fact \emph{exponential} in the input size.

The running time of $O\left(nC\right)$ is called \emph{pseudo-polynomial}. I.e. it is polynomial in the numbers
of the problem, but exponential in the input size. 

Recall that the problem is \#P-hard, so unless P=NP, not only a DP algorithm must be intractable, but also any other
exact algorithm. One way to tackle this hardness result is to get an approximate solution in polynomial time.

Before approaching to deal with approximation algorithms, let us define the notion of strongly polynomial algorithm.
An algorithm is s\emph{trongly polynomial} if the number of elementary steps is polynomially bounded in the dimension
of the input, where the \emph{dimension of the input} is the number of data items in the input (that is, each number
is considered to add one to the dimension of the input)\cite{key-12}. I.e. if an algorithm is strongly polynomial,
then the number of elementary steps is independent of the numbers in the input. (The original definition contains
another condition about rational numbers, which is not relevant here, and is therefore omitted).

\subsection{Approximation algorithms\label{sub:Approximation-algorithms}}

Since the class of \#P-hard is believed to be intractable, we turn to polynomial time approximation algorithms.

There are several measures of approximation, such as relative error or additive error. Here we deal only with relative
error approximations (the rest of this section is based on \cite{key-1} p. 86-116).

Let $P$ be a problem, $x$ be an instance, and $P\left(x\right)\in\mathbb{R}^{+}$ be the solution value. 
\begin{defn}
\label{def:K-apx-algo}Let $K>1$. We say that an algorithm is a $K-$ \emph{approximation algorithm} (or \emph{constant
factor approximation algorithm}) of problem $P$ if for every instance $x$ returns $s\left(x\right)$ such that
$P\left(x\right)\leq s\left(x\right)\leq KP\left(x\right)$.
\end{defn}
When a problem is intractable, we would like to find $K-$approximations for smaller and smaller values of $K$,
that bring us as close as possible to the solution. Of course, we will pay for the greater accuracy in larger running
time.
\begin{defn}
We say that an algorithm is a Polynomial Time Approximation Scheme (PTAS) if for any given $\epsilon>0$ it is a
$\left(1+\epsilon\right)-$approximation algorithm that runs in time polynomial in the input size.
\end{defn}
While polynomial in the input size, a PTAS may be exponential in $\frac{1}{\epsilon}$, e.g. $O\left(n^{2}2^{\frac{1}{\epsilon}}\right)$.
I.e. the dependency on the quality of the approximation may be very large. In fact, this dependence may sometimes
prevent any practical use of the scheme. A better situation is when the algorithm's running time is polynomial also
in the approximation ratio:
\begin{defn}
We say that an algorithm is a Fully Polynomial Time Approximation Scheme (FPTAS) if for any given $\epsilon>0$
it is a $\left(1+\epsilon\right)-$approximation algorithm that runs in time polynomial in both $\frac{1}{\epsilon}$
and the input size, e.g. $O\left(\frac{n^{2}}{\epsilon}\right)$. 
\end{defn}
In this paper, we develop an FPTAS for the problems mentioned above via the method of $K-$approximation sets and
functions which we survey in the next section.

\subsection{$K-$approximation sets and functions\label{sub:approximation-sets-and}}

In this section we survey the method of $K-$approximation sets and functions as defined in \cite{key-5}.

The next definition of $K-$approximation functions is similar to Definition \ref{def:K-apx-algo}:
\begin{defn}
Let $f:S\rightarrow\mathbb{R}^{+}$ be an arbitrary function, and $K\geq1$. We say that $\tilde{f}$ is a $K-$\emph{approximation
function }of $f$ if $f\left(x\right)\leq\hat{f}\left(x\right)\leq Kf\left(x\right)$ holds for every $x\in S$.
\end{defn}
We start the discussion about the method of $K-$approximation sets and functions with an example.
\begin{example}
\label{exa:}Let $\varphi:\left\{ 1,\ldots,M\right\} \rightarrow\mathbb{Z}^{+}$ be the identity function, i.e.
$\varphi\left(i\right)\equiv i,\,\forall i$. Suppose we want to store $\varphi$ on the computer memory as tuples
$\left(i,\varphi\left(i\right)\right)$. If $M$ is a small number, we should not have any problem. But if $M$
is a big number, we may not have enough space to store all such tuples. However, we can be satisfied with a $2-$approximation
of the function: For every $1\leq i\leq M$ define $\tilde{\varphi}\left(i\right)=2^{j}$ if $2^{j-1}<i\leq2^{j}$,
i.e. $\tilde{\varphi}\left(i\right)=2^{\lceil\log i\rceil}$. It is easy to see that for every $1\leq i\leq M$
we have $\varphi\left(i\right)\leq\tilde{\varphi}\left(i\right)\leq2\varphi\left(i\right)$, thus $\tilde{\varphi}$
is a $2-$approximation function of $\varphi$. Now, note that the problem of not having enough space is solved:
we can store only the set $\left\{ \left(2^{i},\tilde{\varphi}\left(i\right)\right)|i=0,\ldots,\lfloor\log M\rfloor\right\} \bigcup\left\{ \left(M,\tilde{\varphi}\left(M\right)\right)\right\} $.
The cardinality of this set is $O\left(\log M\right)$. Extracting $\tilde{\varphi}\left(i\right)$ for an arbitrary
$1\leq i\leq M$ can be done by binary search on the set $\left\{ 1,2,\ldots,2^{\lfloor\log M\rfloor},M\right\} $
in $O\left(\log\log M\right)$ time. The principles used to overcome the space problem is to pick wisely representatives
of the original function $\varphi$, and construct by them an approximation function. $\square$
\end{example}
In the next paragraphs we consider the arguments of Example \ref{exa:} for an arbitrary monotone function $\varphi$.
To simplify the discussion, from now on we modify Halman et al.'s definitions in \cite{key-5} to integer-valued
nondecreasing functions over intervals of integer numbers.

Let $D=\left\{ A,A+1,\ldots,B|A,B\in\mathbb{Z}\right\} $ be a finite interval, and $\varphi:D\rightarrow\mathbb{Z}^{+}$
be an arbitrary nondecreasing function over $D$. Suppose $\varphi$ is accessed via an oracle in $t_{\varphi}$
time units. The input for this problem is $A,B,\varphi^{max}=\varphi\left(B\right)$, so the input length is $O\left(\log A+\log B+\log\varphi\left(B\right)\right)$.
Of course, by querying all values $\varphi\left(x\right)$ in $x\in D$ and storing them in a sorted array of the
form $\left\{ \left(x,\varphi\left(x\right)\right)|x\in D\right\} $, we can obtain in $O\left(\left|B-A\right|t_{\varphi}\right)$
time a representation of size $O\left(\left|B-A\right|\right)$ which can return the value $\varphi\left(x\right)$
for any $x$ in $O\left(\log\left|B-A\right|\right)$ time. However, $\left|B-A\right|$ is not necessarily polynomially
bounded in the input size. 

The method of $K-$approximation sets and functions enables us to build an approximation that is both \emph{succinct}
(of size polylogarithmic in the input size) and \emph{efficient} (can be built in time polylogarithmic in the input
size).
\begin{defn}
\label{def:K apx set}Let $K\geq1$, and let $\varphi:\left\{ A,...,B\right\} \rightarrow\mathbb{Z}^{+}$ be a nondecreasing
function. Let $W=\left\{ w_{1},...,w_{r}\right\} $ be a subset of $\left\{ A,...,B\right\} $, where $A=w_{1}<w_{2}<...<w_{r}=B$. \end{defn}
\begin{itemize}
\item We say that $W$ is a \emph{$K-$approximation set} of $\varphi$ if $\varphi\left(w_{j+1}\right)\leq K\varphi\left(w_{j}\right)$
for each $j=1,...,r-1$ that satisfies $w_{j+1}-w_{j}>1$.
\item The \emph{approximation of $\varphi$ induced by $W$ }is:
\[
\hat{\varphi}\left(x\right)=\begin{cases}
\varphi\left(x\right) & x\in W\\
\varphi\left(w_{i+1}\right) & w_{i}<x<w_{i+1}\mbox{ for some \ensuremath{i}}
\end{cases}
\]

\end{itemize}
The following 2 propositions show the usefulness of the arguments in Definition \ref{def:K apx set} to achieve
a succinct $K-$approximation function. Proposition \ref{prop:4.5} tells us that the approximation function induced
by a $K-$approximation set is indeed a $K-$approximation function, and Proposition \ref{prop:4.6} is about how
we can construct efficiently a succinct $K-$approximation set.
\begin{prop}
\label{prop:4.5}(Based on Proposition 4.5 of \cite{key-5}):Let $\varphi:\left\{ A,...,B\right\} \rightarrow\mathbb{Z}^{+}$
be a nondecreasing function. Let $K\geq1$, and let $W$ be a $K-$approximation set of $\varphi$. Let $\hat{\varphi}$
the approximation of $\varphi$ induced by $W$. Then $\hat{\varphi}$ is a nondecreasing $K-$approximation function
of $\varphi$. In addition, if $\varphi$ is stored as a sorted array $\left\{ \left(x,\varphi\left(x\right)\right)|x\in W\right\} $,
then for any $x\in\left\{ A,\ldots,B\right\} $, $\hat{\varphi}\left(x\right)$ can be determined in $O\left(\log\left|W\right|\right)$
time.
\end{prop}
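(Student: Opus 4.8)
The plan is to verify the three claims of Proposition~\ref{prop:4.5} in turn: that $\hat\varphi$ is a $K$-approximation function of $\varphi$, that $\hat\varphi$ is nondecreasing, and that a query can be answered in $O(\log|W|)$ time. Throughout I would lean on the two defining properties of the setup: that $\varphi$ itself is nondecreasing, and that $W=\{w_1,\dots,w_r\}$ is a $K$-approximation set, meaning $\varphi(w_{j+1})\le K\varphi(w_j)$ whenever the gap $w_{j+1}-w_j$ exceeds $1$.

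\medskip{}

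First I would establish the approximation inequality $\varphi(x)\le\hat\varphi(x)\le K\varphi(x)$ for every $x\in\{A,\dots,B\}$. I would split into the two cases of the definition of $\hat\varphi$. If $x\in W$ then $\hat\varphi(x)=\varphi(x)$ and both inequalities hold trivially (using $K\ge 1$ for the upper bound). The interesting case is $w_i<x<w_{i+1}$ for some $i$, where $\hat\varphi(x)=\varphi(w_{i+1})$. For the lower bound, since $x<w_{i+1}$ and $\varphi$ is nondecreasing, $\varphi(x)\le\varphi(w_{i+1})=\hat\varphi(x)$. For the upper bound I need $\varphi(w_{i+1})\le K\varphi(x)$. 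Here I would use that the existence of a point $x$ strictly between $w_i$ and $w_{i+1}$ forces $w_{i+1}-w_i>1$, which is exactly the hypothesis that triggers the $K$-approximation-set condition $\varphi(w_{i+1})\le K\varphi(w_i)$; combining this with $\varphi(w_i)\le\varphi(x)$ (again by monotonicity, since $w_i<x$) yields $\hat\varphi(x)=\varphi(w_{i+1})\le K\varphi(w_i)\le K\varphi(x)$.

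\medskip{}

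Next I would check that $\hat\varphi$ is nondecreasing. The cleanest way is to observe that $\hat\varphi$ is a step function that, on each half-open interval, takes the value of $\varphi$ at the right endpoint $w_{i+1}$, and equals $\varphi(w_{i+1})$ at the breakpoint $w_{i+1}$ itself as well; so $\hat\varphi$ is constant equal to $\varphi(w_{i+1})$ on $\{w_i+1,\dots,w_{i+1}\}$. Since the successive plateau values $\varphi(w_2)\le\varphi(w_3)\le\cdots\le\varphi(w_r)$ are nondecreasing by monotonicity of $\varphi$, and the plateaus are arranged in increasing order of argument, $\hat\varphi$ is nondecreasing. I would make this rigorous by taking arbitrary $x<y$ and comparing the indices $i,i'$ of the intervals containing them, noting $i\le i'$ and hence $\hat\varphi(x)=\varphi(w_{i+1})\le\varphi(w_{i'+1})=\hat\varphi(y)$.

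\medskip{}

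Finally, for the query time, I would note that evaluating $\hat\varphi(x)$ reduces to locating the unique index $i$ with $w_i\le x\le w_{i+1}$ (or detecting $x\in W$) within the sorted array $\{(x,\varphi(x))\mid x\in W\}$, which is a binary search costing $O(\log|W|)$ time, after which the stored value $\varphi(w_{i+1})$ (or $\varphi(x)$ if $x\in W$) is returned in $O(1)$. \emph{The main obstacle} I anticipate is purely a matter of bookkeeping rather than depth: making sure the implication ``a strict interior point exists $\Rightarrow w_{i+1}-w_i>1$'' is stated explicitly, since this is precisely the hinge that lets the $K$-approximation-set hypothesis apply, and being careful that consecutive points with $w_{j+1}-w_j=1$ (where no $K$ bound is assumed) are handled correctly by the first branch of the definition of $\hat\varphi$.
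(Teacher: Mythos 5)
Your proof is correct and complete. Note that the paper itself gives no proof of this proposition at all: it imports the statement from Proposition 4.5 of the cited reference (Halman et al.), so there is no in-paper argument to compare against. Your verification is the standard one, and you correctly isolate the one non-trivial hinge: since the domain consists of integers, the existence of a point $x$ strictly between $w_i$ and $w_{i+1}$ forces $w_{i+1}-w_i>1$, which is exactly what activates the $K$-approximation-set inequality $\varphi\left(w_{i+1}\right)\leq K\varphi\left(w_i\right)$; combined with monotonicity this gives both bounds, and the plateau description plus binary search dispose of the monotonicity and query-time claims.
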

In Algorithm 1 of \cite{key-5}, Halman et al. introduce the algorithm \textsc{ApxSet}$\left(\varphi,D,x^{*},K\right)$
which when given as arguments (i) a unimodal discrete function $\varphi$ with (ii) a finite domain $D$ of real
numbers which is (iii) minimized at $x^{*}$, and (iv) an approximation ratio $K$, it returns a $K-$approximation
set for $\varphi$. Here we use this algorithm for nondecreasing functions with interval in $\mathbb{Z}$ as a domain.
Thus, for simplicity, we omit $x^{*}$ from the input, and denote the algorithm by \textsc{ApxSet}$\left(\varphi,\left\{ A,\ldots,B\right\} ,K\right)$. 

\begin{algorithm}[h]
\protect\caption{\label{alg:ApxSet}Constructing a $K-$approximation set for a nondecreasing function $\varphi$}

\begin{enumerate}
\item \textbf{Function }\textbf{\textsc{ApxSet}}$\left(\varphi,\left\{ A,\ldots,B\right\} ,K\right)$
\item $x\leftarrow B$
\item $W\leftarrow\left\{ A,B\right\} $
\item while $x>A$ do

\begin{enumerate}
\item $x\leftarrow\min\left\{ x-1,\min\left\{ y\in\left\{ A,\ldots,B\right\} |K\varphi\left(y\right)\geq\varphi\left(x\right)\right\} \right\} $
\item $W\leftarrow W\bigcup\left\{ x\right\} $
\end{enumerate}
\item end while
\item \textbf{return} $W$\end{enumerate}
\end{algorithm}

\begin{prop}
\label{prop:4.6}(Based on Proposition 4.6 in \cite{key-5})Let $D=\left\{ A,\ldots,B\right\} $, and let $\varphi:D\rightarrow\mathbb{Z}^{+}$
be a nondecreasing function. Let $t_{\varphi}$ be an upper bound on the time needed to evaluate $\varphi$. Then,
for every given parameters $\varphi,D$ and $K>1$, function \textsc{ApxSet} computes a $K-$approximation set of
$\varphi$ in $O\left(t_{\varphi}\left(1+\log_{K}\varphi^{\mbox{max}}\right)\log|D|\right)$ time. This $K-$approximation
set has cardinality of $O\left(1+\log_{K}\varphi^{\mbox{max}}\right)$.
\end{prop}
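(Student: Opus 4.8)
The plan is to prove three things: that \textsc{ApxSet} returns a genuine $K$-approximation set, that this set is small, and that the algorithm respects the stated time bound. I would organize everything around the decreasing sequence of values taken by the variable $x$: write $B = p_0 > p_1 > \cdots > p_m = A$ for the successive values assigned to $x$, so that the returned set is $W = \{p_0,\ldots,p_m\}$ and the number of while-iterations is exactly $m$. The one computation to understand at each step is the assignment $p_{i+1} = \min\{p_i - 1,\, y^*_i\}$, where $y^*_i = \min\{y \in \{A,\ldots,B\} : K\varphi(y) \geq \varphi(p_i)\}$. Since $K \geq 1$ gives $K\varphi(p_i) \geq \varphi(p_i)$, the set defining $y^*_i$ is nonempty and $y^*_i \leq p_i$, so the sequence is strictly decreasing and, staying bounded below by $A$, terminates exactly at $A$.

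For correctness I would take two consecutive elements $p_{i+1} < p_i$ of $W$ and verify the defining inequality of Definition~\ref{def:K apx set} only when $p_i - p_{i+1} > 1$. A gap larger than $1$ forces $p_{i+1} = y^*_i$ (otherwise the guard gives $p_{i+1} = p_i - 1$), and then the defining property of $y^*_i$ yields directly $K\varphi(p_{i+1}) \geq \varphi(p_i)$, which is precisely the $K$-approximation-set condition. Thus $W$ is a $K$-approximation set, and Proposition~\ref{prop:4.5} then applies to it.

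The cardinality bound is where the real work lies. I would prove the two-step contraction $\varphi(p_{i+2}) < \varphi(p_i)/K$ for every $i \leq m-2$. Whenever $p_{i+2}$ exists we have $p_{i+1} > A$, so the point $p_{i+1} - 1$ lies in the domain and $p_{i+2} \leq p_{i+1} - 1$. In the large-gap case $p_{i+1} = y^*_i$, minimality of $y^*_i$ gives $K\varphi(p_{i+1} - 1) < \varphi(p_i)$; in the unit-step case $p_{i+1} = p_i - 1$ the inequality $y^*_i \geq p_i - 1$ likewise forces $K\varphi(p_i - 2) < \varphi(p_i)$, and $p_i - 2 = p_{i+1} - 1$. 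Either way $\varphi(p_{i+2}) \leq \varphi(p_{i+1} - 1) < \varphi(p_i)/K$ by monotonicity. Chaining this separately along even and odd indices gives $\varphi(p_{2j}) < \varphi^{\mbox{max}}/K^{\,j}$ and the analogue for odd indices; since $\varphi$ is integer-valued and positive, $\varphi(p_i) \geq 1$, which forces $m = O(1 + \log_K \varphi^{\mbox{max}})$ and hence $|W| = O(1 + \log_K \varphi^{\mbox{max}})$.

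Finally, for the running time I would observe that each while-iteration performs one evaluation of $\varphi(p_i)$ and then one search for $y^*_i$. Because $\varphi$ is nondecreasing, the predicate $K\varphi(y) \geq \varphi(p_i)$ is monotone in $y$, so $y^*_i$ can be located by binary search over $\{A,\ldots,B\}$ using $O(\log|D|)$ oracle calls, each costing $t_\varphi$. Thus one iteration costs $O(t_\varphi \log|D|)$, and multiplying by the $m = O(1 + \log_K \varphi^{\mbox{max}})$ iterations from the cardinality bound yields the claimed $O(t_\varphi(1 + \log_K \varphi^{\mbox{max}})\log|D|)$ total. The main obstacle is the two-step contraction: the step length varies, so one must treat the unit-step and large-gap cases uniformly and make sure the witness point $p_{i+1} - 1$ stays inside the domain — which is exactly where the $\min\{x-1,\cdot\}$ guard in the loop and the existence of $p_{i+2}$ are used.
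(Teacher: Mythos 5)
The paper never proves this proposition: it is stated as imported from Halman et al.\ \cite{key-5}, so your argument has to stand on its own, and for the most part it does. Your structure is the standard one and all three pillars are sound: a gap $p_i-p_{i+1}>1$ forces $p_{i+1}=y_i^*$ and hence $K\varphi(p_{i+1})\geq\varphi(p_i)$, which is exactly the condition in Definition \ref{def:K apx set}; the two-step contraction $\varphi(p_{i+2})<\varphi(p_i)/K$ (with the correct handling of both the unit-step and large-gap cases, and the correct check that the witness $p_{i+1}-1$ lies in the domain) yields the cardinality bound; and binary search for $y_i^*$ is legitimate because the predicate $K\varphi(y)\geq\varphi(p_i)$ is monotone in $y$, giving $O\left(t_\varphi\log|D|\right)$ per iteration. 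The one flaw is your claim that $\varphi(p_i)\geq1$ because $\varphi$ is ``integer-valued and positive'': under the paper's conventions $\mathbb{Z}^{+}=\mathbb{N}\cup\{0\}$, so $\varphi$ may take the value $0$, and then the chaining inequality $1\leq\varphi(p_{2j})<\varphi^{\mbox{max}}/K^{j}$ breaks down. The repair is one line: if $\varphi(p_i)=0$ then \emph{every} $y\in\{A,\ldots,B\}$ satisfies $K\varphi(y)\geq\varphi(p_i)$, so $y_i^*=A$ and $p_{i+1}=\min\{p_i-1,A\}=A$, terminating the loop; hence $\varphi(p_i)\geq1$ for all $i\leq m-2$, which is precisely the range where your chaining argument uses the lower bound, and the conclusion $m=O\left(1+\log_K\varphi^{\mbox{max}}\right)$ survives intact.
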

Recalling the principles of Example \ref{exa:}, the function $ApxSet$ enables us to pick a succinct set of representatives,
and build a $K-$approximation function by the arguments in Definition \ref{def:K apx set}.

Sections 5 and 6 in \cite{key-5} provide a set of general computational rules of $K-$approximation sets and functions.
We now survey some of these rules, needed in the proceeding sections.

The validity of the next proposition follows directly from the definition of $K-$approximation functions.
\begin{prop}
\label{prop:5.1}(Based on Proposition 5.1 of \cite{key-5}):For $i=1,2$ let $K_{i}>1$, let $\varphi_{i}:\left\{ A,\ldots,B\right\} \rightarrow\mathbb{Z}^{+}$
and let $\tilde{\varphi}_{i}:\left\{ A,\ldots,B\right\} \rightarrow\mathbb{Z}^{+}$ be a $K_{i}-$approximation
of $\varphi_{i}$. The following properties hold:

\textbf{(1) Summation of approximation:} $\tilde{\varphi}_{1}+\tilde{\varphi}_{2}$ is a $\max\left\{ K_{1},K_{2}\right\} -$approximation
function of $\varphi_{1}+\varphi_{2}$.

\textbf{(2) Approximation of approximation: }If $\varphi_{2}=\tilde{\varphi}_{1}$ then $\tilde{\varphi}_{2}$ is
a $K_{1}K_{2}-$approximation function of $\varphi_{1}$.\end{prop}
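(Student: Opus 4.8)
The plan is to prove both parts by a direct, pointwise application of the defining inequalities of a $K$-approximation function. Throughout, I would fix an arbitrary $x\in\left\{ A,\ldots,B\right\} $; since every assertion is a sandwich inequality that must hold at every point, it suffices to verify the required bounds at this single $x$ and then quantify over all such $x$. By hypothesis we have $\varphi_{i}\left(x\right)\leq\tilde{\varphi}_{i}\left(x\right)\leq K_{i}\varphi_{i}\left(x\right)$ for $i=1,2$, and these two chains are the only ingredients needed.

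For part (1) I would add the two chains. The lower bound is immediate: $\varphi_{1}\left(x\right)+\varphi_{2}\left(x\right)\leq\tilde{\varphi}_{1}\left(x\right)+\tilde{\varphi}_{2}\left(x\right)$. For the upper bound I would bound each coefficient in $K_{1}\varphi_{1}\left(x\right)+K_{2}\varphi_{2}\left(x\right)$ by $\max\left\{ K_{1},K_{2}\right\} $, which gives $\tilde{\varphi}_{1}\left(x\right)+\tilde{\varphi}_{2}\left(x\right)\leq\max\left\{ K_{1},K_{2}\right\} \left(\varphi_{1}\left(x\right)+\varphi_{2}\left(x\right)\right)$. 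Together these two inequalities are exactly the statement that $\tilde{\varphi}_{1}+\tilde{\varphi}_{2}$ is a $\max\left\{ K_{1},K_{2}\right\} -$approximation of $\varphi_{1}+\varphi_{2}$. The one point to flag is that replacing each $K_{i}$ by their maximum preserves the inequality only because $\varphi_{1}\left(x\right),\varphi_{2}\left(x\right)\geq0$, which is guaranteed since the functions take values in $\mathbb{Z}^{+}$.

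For part (2) I would concatenate the two sandwiches. Substituting $\varphi_{2}=\tilde{\varphi}_{1}$ into the hypothesis on $\tilde{\varphi}_{1}$ yields $\varphi_{1}\left(x\right)\leq\varphi_{2}\left(x\right)\leq K_{1}\varphi_{1}\left(x\right)$, while the hypothesis on $\tilde{\varphi}_{2}$ yields $\varphi_{2}\left(x\right)\leq\tilde{\varphi}_{2}\left(x\right)\leq K_{2}\varphi_{2}\left(x\right)$. Chaining gives the lower bound $\varphi_{1}\left(x\right)\leq\varphi_{2}\left(x\right)\leq\tilde{\varphi}_{2}\left(x\right)$ and the upper bound $\tilde{\varphi}_{2}\left(x\right)\leq K_{2}\varphi_{2}\left(x\right)\leq K_{2}K_{1}\varphi_{1}\left(x\right)$, so $\tilde{\varphi}_{2}$ is a $K_{1}K_{2}-$approximation of $\varphi_{1}$.

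There is no substantive obstacle here: as the paper already anticipates, the result follows directly from the definition, and both claims reduce to the monotonicity of addition and of multiplication by nonnegative constants applied to the defining inequalities. The only detail requiring a moment of care is making the nonnegativity of the $\varphi_{i}$ explicit in part (1), since it is precisely nonnegativity that licenses bounding the weighted sum $K_{1}\varphi_{1}+K_{2}\varphi_{2}$ by $\max\left\{ K_{1},K_{2}\right\} $ times the sum.
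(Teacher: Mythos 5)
Your proof is correct and takes exactly the route the paper intends: the paper gives no explicit proof, asserting only that the proposition ``follows directly from the definition of $K$-approximation functions,'' and your pointwise verification (adding the two sandwich inequalities for part (1), chaining them for part (2)) is precisely that direct argument, with the added care of noting that nonnegativity of the $\varphi_{i}$ justifies replacing $K_{1},K_{2}$ by their maximum.
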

\begin{rem}
\label{rem:substruct}Note that there is no rule for subtraction of functions. To illustrate why, let $\varphi$
and $\psi,$ be arbitrary functions, and define $\xi=\varphi-\psi$. If for some $j$ $\varphi\left(j\right)=\psi\left(j\right)$,
we must have that for any $K-$ approximation function $\tilde{\xi}$ of $\xi$, $\tilde{\xi}\left(j\right)=0$.
But the approximation $\tilde{\varphi}\left(j\right)$ may not be equal to the approximation $\tilde{\psi}\left(j\right)$,
so $\tilde{\varphi}-\tilde{\psi}$ is not necessarily a $K-$approximation function of $\xi$.\end{rem}
\begin{prop}
\label{prop:6.2}(Proposition 6.2(3) in \cite{key-5}):For $i=1,2$ let $K_{i}>1$, let $\varphi_{i}:\left\{ A,\ldots,B\right\} \rightarrow\mathbb{Z}^{+}$
be a nondecreasing function. Let $W_{1}$ a $K_{1}-$approximation set of $\varphi_{1}$. Then:

\textbf{Approximation of approximation sets: }If $\varphi_{1}$ is a $K_{2}-$approximation of $\varphi_{2}$, then
$\hat{\varphi}_{1}$ (i.e. the approximation of $\varphi_{1}$ induced by $W_{1}$) is a $K_{1}K_{2}-$approximation
of $\varphi_{2}$.
\end{prop}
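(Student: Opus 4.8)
The plan is to reduce the statement to two facts already available in the excerpt: Proposition \ref{prop:4.5}, which controls how $\hat{\varphi}_1$ relates to $\varphi_1$, and the transitivity of relative approximation recorded in Proposition \ref{prop:5.1}(2). The key observation is that the hypothesis splits naturally into a chain of two approximations with $\varphi_1$ sitting in the middle, and that composing them multiplies the approximation ratios.

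First I would invoke Proposition \ref{prop:4.5}. Since $W_1$ is a $K_1$-approximation set of the nondecreasing function $\varphi_1$, that proposition guarantees that the induced function $\hat{\varphi}_1$ is a $K_1$-approximation function of $\varphi_1$, i.e. $\varphi_1(x)\le\hat{\varphi}_1(x)\le K_1\varphi_1(x)$ for every $x\in\{A,\ldots,B\}$. Next I would write out the second hypothesis, that $\varphi_1$ is a $K_2$-approximation of $\varphi_2$, as $\varphi_2(x)\le\varphi_1(x)\le K_2\varphi_2(x)$.

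The heart of the argument is then a direct sandwich. For the lower bound, chaining the two left-hand inequalities gives $\hat{\varphi}_1(x)\ge\varphi_1(x)\ge\varphi_2(x)$. For the upper bound, chaining the two right-hand inequalities gives $\hat{\varphi}_1(x)\le K_1\varphi_1(x)\le K_1 K_2\varphi_2(x)$. Together these yield $\varphi_2(x)\le\hat{\varphi}_1(x)\le K_1 K_2\varphi_2(x)$ for all $x$, which is exactly the assertion that $\hat{\varphi}_1$ is a $K_1 K_2$-approximation of $\varphi_2$. Equivalently, one may simply cite Proposition \ref{prop:5.1}(2), with $\hat{\varphi}_1$, $\varphi_1$, and $\varphi_2$ playing the roles of the outer approximation, the intermediate function, and the target, respectively.

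I do not expect a genuine obstacle here; the only point requiring care is bookkeeping. One must remember that the $K_1$ factor enters through Proposition \ref{prop:4.5} (it is $\hat{\varphi}_1$ that approximates $\varphi_1$ within $K_1$, and this is not handed to us directly), and that both approximations are two-sided in the same orientation, so their bounds align and the ratios compose multiplicatively rather than interfering. This stands in deliberate contrast to the situation flagged in Remark \ref{rem:substruct}: composition of approximations behaves well precisely because no subtraction is involved, so the inequalities chain cleanly and the ratios simply multiply.
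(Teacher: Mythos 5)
Your proof is correct. Note that the paper itself gives no proof of this proposition — it is quoted verbatim as Proposition 6.2(3) of \cite{key-5} as part of the survey in Section \ref{sub:approximation-sets-and} — so there is no in-paper argument to compare against; your derivation (Proposition \ref{prop:4.5} to establish that $\hat{\varphi}_{1}$ is a $K_{1}$-approximation of $\varphi_{1}$, then chaining the two sandwich inequalities, which is exactly the content of Proposition \ref{prop:5.1}(2) with $\hat{\varphi}_{1}$, $\varphi_{1}$, $\varphi_{2}$ as outer approximation, intermediate, and target) is the natural argument and correctly supplies the missing proof.
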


\section{Counting $m-$tuples\label{sec:Counting-tuples}}

\subsection{DP formulation}

We first introduce a possible DP formulation. Let 
\[
z_{i}\left(j\right)=\#\left\{ \left(x_{1\ell_{1}},...,x_{i\ell_{i}}\right)|\sum_{k=1}^{i}x_{k\ell_{k}}\geq j\right\} 
\]
 Then the DP formulation is 
\begin{eqnarray}
z_{1}\left(j\right) & =\#\left\{ x_{1k}|1\leq k\leq|X_{1}|,\, x_{1k}\geq j\right\}  & j=0,...,B\nonumber \\
z_{i}\left(j\right) & =\sum_{k=1}^{|X_{i}|}z_{i-1}\left(j-x_{ik}\right) & i=2,...,m;\; j=0,...,B\label{eq:1}\\
z_{i}\left(j\right) & =\prod_{j=1}^{i}\left|X_{j}\right| & i=2,...,m;\ j<0\nonumber 
\end{eqnarray}
The solution is $z_{m}\left(B\right)$. Using this formulation, we can compute $z_{m}\left(B\right)$ in $O\left(B\sum_{i=1}^{m}\left|X_{i}\right|\right)$
time, i.e. in time pseudo polynomial in $B$. For every $1\leq i\leq m$, it is easy to see that $z_{i}\left(j\right)$
is a nonincreasing function. This enables us to use the technique of $K-$approximation sets and functions. 

Note that although in Section 1.4 we presented the method of $K-$approximation sets and functions for nondecreasing
functions, it is easy to see that it is applicable to nonincreasing functions as well.

\subsection{Algorithm}
\begin{prop}
\label{prop:m-tuples struct}Let $\tilde{z}_{i-1}\left(\cdot\right)$ be a nonincreasing $K-$approximation function
of $z_{i-1}\left(\cdot\right)$. Denote 
\[
\bar{z}_{i}\left(j\right)=\sum_{k=1}^{|X_{i}|}\tilde{z}_{i-1}\left(j-x_{ik}\right)
\]
Then $\bar{z}_{i}\left(\cdot\right)$ is a nonincreasing $K-$approximation of $z_{i}\left(\cdot\right)$.\end{prop}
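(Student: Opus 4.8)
The plan is to establish the two required properties of $\bar{z}_{i}$ separately: first that it is nonincreasing, and second that it is a $K-$approximation function of $z_{i}$. For the monotonicity, I would argue pointwise: since each $\tilde{z}_{i-1}(\cdot)$ is nonincreasing by hypothesis, every summand $\tilde{z}_{i-1}(j-x_{ik})$ is a composition of the nonincreasing function $\tilde{z}_{i-1}$ with the increasing shift $j\mapsto j-x_{ik}$, hence is itself nonincreasing in $j$. A finite sum of nonincreasing functions is nonincreasing, so $\bar{z}_{i}$ is nonincreasing. This step is routine.

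For the approximation property, the natural approach is to invoke the summation rule, Proposition \ref{prop:5.1}(1), but there is a subtlety: that proposition is stated for approximations of functions on a common domain, whereas here each summand involves a \emph{shifted} argument $j-x_{ik}$. So first I would verify that for each fixed $k$, the shifted function $j\mapsto\tilde{z}_{i-1}(j-x_{ik})$ is a $K-$approximation of $j\mapsto z_{i-1}(j-x_{ik})$. This follows immediately from the definition of a $K-$approximation function applied at the point $x=j-x_{ik}$: the inequalities $z_{i-1}(j-x_{ik})\leq\tilde{z}_{i-1}(j-x_{ik})\leq Kz_{i-1}(j-x_{ik})$ hold because they hold at every point in the domain of $z_{i-1}$. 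In other words, shifting the argument preserves the $K-$approximation relation, since the defining inequalities are required to hold for \emph{all} arguments.

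Having reduced to a common framework, I would then apply the summation rule inductively (or directly, since the rule with $K_{1}=K_{2}=K$ gives a $K-$approximation of a sum of two functions, and iterating over the $|X_{i}|$ summands keeps the ratio at $\max\{K,\ldots,K\}=K$) to conclude that $\sum_{k}\tilde{z}_{i-1}(j-x_{ik})$ is a $K-$approximation of $\sum_{k}z_{i-1}(j-x_{ik})$. By the DP recursion \eqref{eq:1}, the latter sum is exactly $z_{i}(j)$, which completes the argument.

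The only real obstacle is the one flagged above: making sure the argument shift is compatible with the summation rule, i.e.\ confirming that the approximation inequalities survive under reindexing the domain. This is conceptually trivial but worth stating explicitly, because Proposition \ref{prop:5.1} is phrased for functions sharing the domain $\{A,\ldots,B\}$ and does not literally cover shifted arguments; one should note that the boundary/negative-argument cases of $z_{i-1}$ (where it takes the constant product value from \eqref{eq:1}) are handled uniformly, since the approximation relation is preserved there as well. Everything else is a direct application of the machinery already established in the excerpt.
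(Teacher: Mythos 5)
Your proposal is correct and follows essentially the same route as the paper: the paper's own proof is a one-liner invoking the DP formulation \eqref{eq:1} together with Proposition \ref{prop:5.1}(1) for the approximation ratio, and the sum-of-nonincreasing-functions observation for monotonicity. Your additional care about argument shifts and the negative-argument boundary values simply makes explicit what the paper treats as immediate.
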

\begin{proof}
The proposition is immediate by the DP formulation, and by Proposition \ref{prop:5.1}(1) (summation of approximation).
$\bar{z}_{i}$ is a sum of nonincreasing functions, and is therefore nonincreasing. 
\end{proof}
Function \textsc{ApxSet} as presented in Section \ref{sub:approximation-sets-and} is used to construct a $K-$approximation
set for nondecreasing functions. It can be modified in an obvious way for nonincreasing functions (see Appendix
\ref{sec:-ApxSet-for}). We consider it as formulated for nonincreasing functions.

We now introduce the algorithm to approximate $z_{i}\left(\cdot\right)$:

\begin{algorithm}[h]
\protect\caption{An FPTAS for calculating the number of solutions for the $m-$tuples problem}

\begin{enumerate}
\item \textbf{Function }\textbf{\textsc{FPTASMtuple}}$\left(\left\{ X_{i},\ldots,X_{m}\right\} ,B,\epsilon\right)$
\item $K$$\leftarrow$\textcolor{black}{$\sqrt[m]{1+\epsilon}$,} $W_{1}\leftarrow$\textsc{ApxSet}\textbf{$\left(z_{1},\left\{ 0,...,B\right\} ,K\right)$}
\item let $\hat{\bar{z}}_{1}$ be the approximation of $z_{1}$ indued by $W_{1}$
\item for i:=2 to $m$ 

\begin{enumerate}
\item let $\bar{z}_{i}\left(j\right)=\sum_{k=1}^{|X_{i}|}\hat{\bar{z}}_{i-1}\left(j-x_{ik}\right)$
\item $W_{i}\leftarrow$\textsc{ApxSet}$\left(\bar{z}_{i},\left\{ 0,...,B\right\} ,K\right)$
\item let $\hat{\bar{z}}_{i}$ be the approximation of $\bar{z}_{i}$ induced by $W_{i}$
\end{enumerate}
\item end for
\item \textbf{return} $\hat{\bar{z}}_{m}\left(B\right)$\end{enumerate}
\end{algorithm}

Before proving this algorithm's performance, we give an example for the operation of the algorithm:
\begin{example}
Let 
\begin{eqnarray*}
X_{1} & = & \left\{ 1,3,7\right\} \\
X_{2} & = & \left\{ 2,5\right\} \\
X_{3} & = & \left\{ 3,9\right\} 
\end{eqnarray*}
and let $B=17$.

By (\ref{eq:1}) the exact functions $z_{1}z_{2},z_{3}$ are:

\begin{tabular}{|c|c|c|c|c|c|c|c|c|c|c|c|c|c|c|c|c|c|c|}
\hline 
 & 0 & 1 & 2 & 3 & 4 & 5 & 6 & 7 & 8 & 9 & 10 & 11 & 12 & 13 & 14 & 15 & 16 & 17\tabularnewline
\hline 
\hline 
$z_{1}$ & 3 & 3 & 2 & 2 & 1 & 1 & 1 & 1 & 0 & 0 & 0 & 0 & 0 & 0 & 0 & 0 & 0 & 0\tabularnewline
\hline 
$z_{2}$ & 6 & 6 & 6 & 6 & 5 & 5 & 4 & 3 & 3 & 2 & 1 & 1 & 1 & 0 & 0 & 0 & 0 & 0\tabularnewline
\hline 
$z_{3}$ & 12 & 12 & 12 & 12 & 12 & 12 & 12 & 11 & 11 & 10 & 9 & 9 & 8 & 6 & 6 & 5 & 3 & 3\tabularnewline
\hline 
\end{tabular}\\

Now, suppose we execute Algorithm 2 with $\epsilon=7$, i.e. $K=\sqrt[3]{1+7}=2$ and the output is guaranteed to
provide a $8-$approximation.

The $2-$approximation set of the function $z_{1}$ is $W_{1}=\left\{ 0,4,8,17\right\} $, and the $2-$approximation
function induced by $W_{1}$ is :

\begin{tabular}{|c|c|c|c|c|c|c|c|c|c|c|c|c|c|c|c|c|c|c|}
\hline 
 & 0 & 1 & 2 & 3 & 4 & 5 & 6 & 7 & 8 & 9 & 10 & 11 & 12 & 13 & 14 & 15 & 16 & 17\tabularnewline
\hline 
\hline 
$\hat{\bar{z}}_{1}$ & 3 & 3 & 3 & 3 & 1 & 1 & 1 & 1 & 0 & 0 & 0 & 0 & 0 & 0 & 0 & 0 & 0 & 0\tabularnewline
\hline 
\end{tabular}\\

We turn now to the for-loop. The first iteration is for $i=2$:

\begin{tabular}{|c|c|c|c|c|c|c|c|c|c|c|c|c|c|c|c|c|c|c|}
\hline 
 & 0 & 1 & 2 & 3 & 4 & 5 & 6 & 7 & 8 & 9 & 10 & 11 & 12 & 13 & 14 & 15 & 16 & 17\tabularnewline
\hline 
\hline 
$\bar{z}_{2}$ & 6 & 6 & 6 & 6 & 6 & 6 & 4 & 4 & 4 & 2 & 1 & 1 & 1 & 0 & 0 & 0 & 0 & 0\tabularnewline
\hline 
\end{tabular}\\

Note that the algorithm does not compute $\bar{z}_{2}$ over its entire domain, but only over the values needed
for constructing $W_{2}$. Note also that $\bar{z}_{2}$ is $2-$approximation of $z_{2}$ (In fact in this example
it is $1\frac{1}{3}-$approximation). Now, $W_{2}=\left\{ 0,9,13,17\right\} $ is a $2-$approximation set of $\bar{z}_{2}$
. The function induced by $W_{2}$ is:

\begin{tabular}{|c|c|c|c|c|c|c|c|c|c|c|c|c|c|c|c|c|c|c|}
\hline 
 & 0 & 1 & 2 & 3 & 4 & 5 & 6 & 7 & 8 & 9 & 10 & 11 & 12 & 13 & 14 & 15 & 16 & 17\tabularnewline
\hline 
\hline 
$\hat{\bar{z}}_{2}$ & 6 & 6 & 6 & 6 & 6 & 6 & 6 & 6 & 6 & 2 & 2 & 2 & 2 & 0 & 0 & 0 & 0 & 0\tabularnewline
\hline 
\end{tabular}\\

This is a $4-$approximation function of $z_{2}$ (In fact $2-$approximation). 

We turn now to the next iteration for $i=3$:

\begin{tabular}{|c|c|c|c|c|c|c|c|c|c|c|c|c|c|c|c|c|c|c|}
\hline 
 & 0 & 1 & 2 & 3 & 4 & 5 & 6 & 7 & 8 & 9 & 10 & 11 & 12 & 13 & 14 & 15 & 16 & 17\tabularnewline
\hline 
\hline 
$\bar{z}_{3}$ & 12 & 12 & 12 & 12 & 12 & 12 & 12 & 12 & 12 & 12 & 12 & 12 & 8 & 8 & 8 & 8 & 6 & 6\tabularnewline
\hline 
\end{tabular}\\

This is $4-$approximation of $z_{3}$. This function is used to calculate a $2-$approximation set of $\bar{z}_{3}$:
$W_{3}=\left\{ 0,17\right\} $, so the function induced by $W_{3}$ is:

\begin{tabular}{|c|c|c|c|c|c|c|c|c|c|c|c|c|c|c|c|c|c|c|}
\hline 
 & 0 & 1 & 2 & 3 & 4 & 5 & 6 & 7 & 8 & 9 & 10 & 11 & 12 & 13 & 14 & 15 & 16 & 17\tabularnewline
\hline 
\hline 
$\hat{\bar{z}}_{3}$ & 12 & 12 & 12 & 12 & 12 & 12 & 12 & 12 & 12 & 12 & 12 & 12 & 12 & 12 & 12 & 12 & 12 & 12\tabularnewline
\hline 
\end{tabular}\\

The function $\hat{\bar{z}}_{3}$ is a $8-$approximation function of $z_{3}$ (In fact $4-$approximation). 

We get the approximated value $\hat{\bar{z}}_{3}\left(17\right)=12$, while the exact solution is $z_{3}\left(17\right)=3$.\end{example}
\begin{prop}
Let $0<\epsilon<1$, and let $X_{1},\ldots,X_{m},B\in\mathbb{N}$ be an instance of the $m-$tuples problem. Then
$\hat{\bar{z}}_{m}\left(B\right)$ calculated by Algorithm 2 is a $\left(1+\epsilon\right)-$approximation function
of $z_{m}\left(B\right)$ . The algorithm is deterministic and runs in time 
\[
O\left(\frac{m^{2}}{\epsilon}\left(\sum_{i=1}^{m}|X_{i}|\right)\log\left(\Pi_{i=1}^{m}|X_{i}|\right)\log\left(\frac{m\log\Pi_{i=1}^{m}|X_{i}|}{\epsilon}\right)\log B\right)
\]
.\end{prop}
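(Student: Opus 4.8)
The plan is to establish the two assertions separately: the approximation guarantee, which follows from a short induction on the stage index using the composition rules surveyed above, and the running time, which is a bookkeeping computation built on top of Proposition~\ref{prop:4.6}.

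First I would prove by induction on $i$ that $\hat{\bar z}_i$ is a $K^i$-approximation of $z_i$. The base case $i=1$ is precisely Proposition~\ref{prop:4.5}: $W_1$ is a $K$-approximation set of $z_1$, so the induced $\hat{\bar z}_1$ is a nonincreasing $K$-approximation of $z_1$. For the inductive step (with $i\ge 2$), assume $\hat{\bar z}_{i-1}$ is a $K^{i-1}$-approximation of $z_{i-1}$. Applying Proposition~\ref{prop:m-tuples struct} with approximation ratio $K^{i-1}$ in place of $K$ shows that $\bar z_i(\cdot)=\sum_{k=1}^{|X_i|}\hat{\bar z}_{i-1}(\cdot-x_{ik})$ is a nonincreasing $K^{i-1}$-approximation of $z_i$. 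Since $W_i$ is a $K$-approximation set of $\bar z_i$, Proposition~\ref{prop:6.2} (with $\varphi_1=\bar z_i$, $K_1=K$, $\varphi_2=z_i$, $K_2=K^{i-1}$) then gives that the induced function $\hat{\bar z}_i$ is a $K\cdot K^{i-1}=K^i$-approximation of $z_i$. Setting $i=m$ and using $K=\sqrt[m]{1+\epsilon}$, so that $K^m=1+\epsilon$, yields that $\hat{\bar z}_m$, and in particular $\hat{\bar z}_m(B)$, is a $(1+\epsilon)$-approximation of $z_m(B)$. Determinism is immediate, since neither \textsc{ApxSet} nor the evaluation of an induced function uses randomness.

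For the running time, write $P=\prod_{i=1}^m|X_i|$. The first step is to convert the base-$K$ logarithm of Proposition~\ref{prop:4.6} into the standard parameters: since $\log K=\tfrac1m\log(1+\epsilon)$ and $\log(1+\epsilon)=\Theta(\epsilon)$ for $0<\epsilon<1$, we have $\log_K v=\tfrac{m\log v}{\log(1+\epsilon)}=O(\tfrac{m\log v}{\epsilon})$. The second step bounds the maximal value $\bar z_i^{\max}=\bar z_i(0)$: because $\bar z_i$ is a $K^{i-1}$-approximation of $z_i$ and $z_i(0)=\prod_{j=1}^i|X_j|\le P$, we get $\bar z_i^{\max}\le K^m P=(1+\epsilon)P\le 2P$, hence $\log\bar z_i^{\max}=O(\log P)$ uniformly in $i$. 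Feeding this into Proposition~\ref{prop:4.6} gives $|W_i|=O(1+\log_K\bar z_i^{\max})=O(\tfrac{m\log P}{\epsilon})$. Next I would bound the evaluation time $t_{\bar z_i}$: each of the $|X_i|$ summands defining $\bar z_i$ is one evaluation of $\hat{\bar z}_{i-1}$, which by Proposition~\ref{prop:4.5} costs $O(\log|W_{i-1}|)=O(\log\tfrac{m\log P}{\epsilon})$, so $t_{\bar z_i}=O(|X_i|\log\tfrac{m\log P}{\epsilon})$. Substituting $t_{\bar z_i}$, $\log_K\bar z_i^{\max}=O(\tfrac{m\log P}{\epsilon})$ and $\log|D|=O(\log B)$ into the cost formula of Proposition~\ref{prop:4.6} bounds the $i$-th call to \textsc{ApxSet} by $O\big(|X_i|\,\tfrac{m\log P}{\epsilon}\,\log\tfrac{m\log P}{\epsilon}\,\log B\big)$; the initial $i=1$ call obeys the same bound. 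Summing over the $m$ iterations and bounding $|X_i|\le\sum_{j}|X_j|$ in each one yields the stated $O\big(\tfrac{m^2}{\epsilon}(\sum_{i=1}^m|X_i|)\log P\,\log\tfrac{m\log P}{\epsilon}\,\log B\big)$.

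The induction for the approximation ratio is routine once the three composition propositions are in hand; the part that demands care is the running time. The key conceptual point is the choice $K=\sqrt[m]{1+\epsilon}$: the approximation error compounds multiplicatively, each of the $m$ stages contributing one extra factor of $K$, so taking the $m$-th root is exactly what keeps the final ratio at $1+\epsilon$ while inflating $\log_K$ only by a factor $m$ relative to $\log_{1+\epsilon}$. The main technical obstacle is verifying that $\bar z_i^{\max}$ stays within a constant factor of $P$ for every $i$ (so that all set cardinalities are $O(\tfrac{m\log P}{\epsilon})$ independent of the stage), together with tracking the nested $\log\tfrac{m\log P}{\epsilon}$ factor arising from binary search into the previous stage's approximation set.
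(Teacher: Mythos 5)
Your proposal is correct and follows essentially the same route as the paper's proof: the same induction showing $\hat{\bar{z}}_i$ is a $K^i$-approximation of $z_i$ (you invoke Proposition \ref{prop:6.2}, while the paper uses Proposition \ref{prop:4.5} together with Proposition \ref{prop:5.1}(2), an equivalent composition step), and the same running-time accounting via Proposition \ref{prop:4.6} with the conversion $\log_K(\cdot)=O\left(\frac{m\log(\cdot)}{\epsilon}\right)$. If anything, you are slightly more careful than the paper in explicitly bounding $\bar{z}_i^{\max}\leq(1+\epsilon)\prod_{j=1}^{m}\left|X_{j}\right|$, a point the paper dispatches with only a parenthetical remark.
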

\begin{proof}
\textbf{Correctness: }By its definition in (\ref{eq:1}), $z_{1}$ is nonincreasing, so the call to \textsc{ApxSet}
in step 2 is well defined. Therefore $\hat{\bar{z}}_{1}$ is a nonincreasing function as a function induced by a
$K-$approximation set of a nonincreasing function.

It is easy to see by induction that $\hat{\bar{z}}_{i}$ is a nonincreasing function: By Proposition \ref{prop:m-tuples struct}
$\bar{z}_{i}$ is nonincreasing, so the call to \textsc{ApxSet} is well defined. $\hat{\bar{z}}_{i}$ is a function
induced by a $K-$approximation set for a nonincreasing function, and thus is a nonincreasing function. 

We next consider the approximation ratio. We first show by induction that: 

(1) $\hat{\bar{z}}_{i}$ is a nonincreasing $K^{i}-$approximation function of $z_{i}$.

(2) $W_{i}$ is a $K-$approximation set of $\bar{z}_{i}$. 

Base case: by Proposition \ref{prop:4.6}, with parameters set to $\varphi=z_{1},\, D=\left\{ 0,\ldots,B\right\} ,$
and $K=K$, $W_{1}$ is a $K-$approximation set of $\bar{z}_{1}\left(\equiv z_{1}\right)$. By Proposition \ref{prop:4.5}
$\hat{\bar{z}}_{1}$ is a $K-$approximation function of $z_{i}$, so (1) and (2) hold for $i=1$.

Assume (1)+(2) hold for $i-1$. By the induction hypothesis (1), $\hat{\bar{z}}_{i-1}$ is a nonincreasing $K^{i-1}-$approximation
function of $z_{i-1}$. Then by Proposition \ref{prop:m-tuples struct} $\bar{z}_{i}$ is a nonincreasing $K^{i-1}-$approximation
function of $z_{i}$. By Proposition \ref{prop:4.6}, with parameters set to $\varphi=\bar{z}{}_{i},\, D=\left\{ 0,\ldots,B\right\} ,$
and $K=K$, $W_{i}$ is a $K-$approximation set of $\bar{z}_{i}$. By Proposition \ref{prop:5.1}(2) (approximation
of approximation) with parameters set to $\varphi_{1}=z_{i},\,\varphi_{2}=\bar{z}_{i},\,\tilde{\varphi}_{2}=\hat{\bar{z}}_{i},\, K_{1}=K^{i-1}$,
and $K_{2}=K$, we get that $\hat{\bar{z}}_{i}$ is a nonincreasing (Definition \ref{def:K apx set}) $K^{i}-$approximation
of $z_{i}.$ This completes the proof by induction.

Recall that $K=\sqrt[m]{1+\epsilon}$. We deduce from (1) above with $i=m$, that for every $0\leq j\leq B$ we
have $z_{m}\left(j\right)\leq\hat{\bar{z}}_{m}\left(j\right)\leq\left(\left(\sqrt[m]{1+\epsilon}\right)^{m}\right)z_{m}\left(j\right).$
When $j=B$, we therefore have $z_{m}\left(B\right)\leq\hat{\bar{z}}_{m}\left(B\right)\leq\left(1+\epsilon\right)z_{m}\left(B\right)$.
This proves the approximation ratio.

\textbf{Running time:} The running time of the algorithm is dominated by the for-loop that has $m$ iterations.
Every iteration is dominated by the call to \textsc{ApxSet}. By Proposition \ref{prop:4.6} the running time is
$O\left(t_{\bar{z}_{i}}\log_{K}\Pi_{i=1}^{m}|X_{i}|\log B\right)$ (Note that if every $m-$tuple is a feasible
solution, then there are $\prod_{i=1}^{m}\left|X_{i}\right|$ solutions). $\hat{\bar{z}}_{i}$ can be stored efficiently
(as a function induced by a $K-$approximation set). Thus, by Proposition \ref{prop:4.5} $t_{\hat{\bar{z}}_{i}}=O\left(\log\log_{K}\Pi_{i=1}^{m}|X_{i}|\right)$.
Thus by the definition of $\bar{z}\left(\cdot\right)$, $t_{\bar{z}_{i}}=O\left(|X_{i}|\log\log_{K}\Pi_{i=1}^{m}|X_{i}|\right)$.
Then the running time of the $i$'th iteration is $O\left(|X_{i}|\log\log_{K}\Pi_{i=1}^{m}|X_{i}|\log_{K}\Pi_{i=1}^{m}|X_{i}|\log B\right)$.
Using the fact that $O\left(\log_{K}\prod_{i=1}^{m}\left|X_{i}\right|\right)=O\left(\frac{m\log\prod_{i=1}^{m}\left|X_{i}\right|}{\log\left(1+\epsilon\right)}\right)=O\left(\frac{m\log\prod_{i=1}^{m}\left|X_{i}\right|}{\epsilon}\right)$,
which holds true by the inequality $\epsilon\leq\log\left(1+\epsilon\right)$, which holds for every $0\leq\epsilon\leq1$.
We thereby conclude that the running time is 
\[
O\left(\frac{m^{2}}{\epsilon}\left(\sum_{i=1}^{m}|X_{i}|\right)\log\left(\Pi_{i=1}^{m}|X_{i}|\right)\log\left(\frac{m\log\Pi_{i=1}^{m}|X_{i}|}{\epsilon}\right)\log B\right)
\]
.
\end{proof}
As far as we know, this is the first FPTAS for this problem.

\section{contingency tables with 2 rows\label{sec:contingency-tables-with}}

The problem of approximately counting contingency tables with 2 rows was considered by Dyer and Greenhill , who
developed a random algorithm to solve it \cite{key-2}. Dyer developed a strongly polynomial random algorithm for
the general case of $m$ rows \cite{key-11}. Gopalon et al developed the first FPTAS for the general problem \cite{key-3}.
Discussion about the running times of these algorithms is given in the end of this section. The following algorithm
is faster than any of the former three, and is relatively simple.

\subsection{First DP formulation}

The general problem for $m$ rows is introduced in Dyer and Greenhill \cite{key-2}. We want to develop an FPTAS
for calculating $\left|\Sigma_{s,r}\right|$ when $m=2,$ i.e. $r=\left(r_{1},r_{2}\right).$ Dyer and Greenhill
offer the following dynamic programming formula with it $\left|\Sigma_{s,r}\right|$ can be calculated:

Let $R=\min\left\{ r_{1},r_{2}\right\} $. The input size is therefore $O\left(\log N+\log R+\sum_{i=1}^{n}\log s_{i}\right).$
For $1\leq j\leq R,\,1\leq i\leq n$ let
\[
\mathcal{G}_{i}\left(j\right)=\left\{ \left(x_{1},...,x_{i}\right)\in\mathbb{Z}^{+^{i}}:\sum_{k=1}^{i}x_{k}=j\mbox{ and }0\leq x_{k}\leq s_{k}\mbox{ for }1\leq k\leq i\right\} 
\]
Let $A_{i}\left(j\right)=\left|\mathcal{G}_{i}\left(j\right)\right|.$ Then the objective function is $\left|\Sigma_{rs}\right|=A_{n}\left(R\right).$
The boundary conditions are $A_{i}\left(0\right)=1$ for $0\leq i\leq n,$ and $A_{0}\left(j\right)=0$ for $1\leq j\leq R.$
Dyer and Greenhill presented the following recurrence: 
\begin{equation}
A_{i}\left(j\right)=\begin{cases}
A_{i-1}\left(j\right)+A_{i}\left(j-1\right) & j-1<s_{i},\\
A_{i-1}\left(j\right)+A_{i}\left(j-1\right)-A_{i-1}\left(j-1-s_{i}\right) & j-1\geq s_{i}.
\end{cases}\label{eq:DPformula1}
\end{equation}

$A_{n}\left(R\right)$ can be computed in $O\left(nR\right)$ time.

An explanation for this formula is as follow: suppose $j-1<s_{i}$, and we want to assign $j$ (identical) items
into cells $1,...,i$. There are 2 cases:

Case 1: The $i$'th cell is empty. Then there are $A_{i-1}\left(j\right)$ combinations for the assignment in cells
$1,\ldots,i-1$.

Case 2: There is at least one item in the $i$'th cell. We put item $j$ in cell $i$, and then there are $A_{i}\left(j-1\right)$
combinations to assign items $1,...,j-1$ into cells $1,\ldots,i$ (Note: $j-1<s_{i}$, so there is no restriction
on the number of items to put in the $i$'th cell).

Suppose now $j-1\geq s_{i}$. Again, there are 2 cases:

Case 1: The $i$'th cell is empty. There are $A_{i-1}\left(j\right)$ combinations as before.

Case 2: There is at least one item in the $i$'th cell. We put item $j$ in cell $i$, and the number of combinations
is as before ($A_{i}\left(j-1\right)$), but we have to preclude the case where the $i$'th cell contains $s_{i}$
items from items $1,\ldots,j-1$ as well as item j, i.e. to subtract $A_{i-1}\left(j-1-s_{i}\right)$.

Now, we prove a proposition about the structure of function $A_{i}\left(\cdot\right),\,1\leq i\leq n$:
\begin{prop}
\label{prop:contingency struct}For every $i=1,\ldots,n$ let $B_{i}=\sum_{l=1}^{i}s_{i}$. The following two properties
hold:

(1) $A_{i}\left(\cdot\right)$ is symmetric around $\frac{B_{i}}{2}$ in the range $\left\{ 0,\ldots,B_{i}\right\} $.
i.e. $A_{i}\left(j\right)=A_{i}\left(B_{i}-j\right)$ for $j=0,\ldots,\lfloor\frac{B_{i}}{2}\rfloor$

(2) $A_{i}\left(\cdot\right)$ is unimodal in the following way: $A_{i}\left(j\right)$ is nondecreasing for $j=0,\ldots,\lfloor\frac{B_{i}}{2}\rfloor$,
is nonincreasing for $j=\lceil\frac{B_{i}}{2}\rceil,\ldots,B_{i}$ and $A_{i}\left(j\right)=0$ for $j>B_{i}$.\end{prop}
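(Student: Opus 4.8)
The plan is to treat the two claims separately: prove the symmetry (1) by an explicit bijection that needs no induction, and then use it to reduce the unimodality (2) to an induction on $i$ via a telescoping identity for the first differences of $A_i$.

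For (1), I would observe that complementing each coordinate gives a bijection $\mathcal{G}_i\left(j\right)\to\mathcal{G}_i\left(B_i-j\right)$, namely $\left(x_1,\ldots,x_i\right)\mapsto\left(s_1-x_1,\ldots,s_i-x_i\right)$: each new coordinate again lies in $\left\{0,\ldots,s_k\right\}$, and the coordinate sum becomes $\sum_k s_k-\sum_k x_k=B_i-j$. This map is its own inverse, so $A_i\left(j\right)=A_i\left(B_i-j\right)$ for every $j$, which is exactly (1). It is also immediate from the definition of $\mathcal{G}_i$ that $A_i\left(j\right)=0$ for $j<0$ or $j>B_i$, since the attainable coordinate sums are precisely $\left\{0,\ldots,B_i\right\}$.

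For (2), I would first record the recurrence obtained directly from the definition by conditioning on the value of $x_i$, namely $A_i\left(j\right)=\sum_{x=0}^{s_i}A_{i-1}\left(j-x\right)$, where $A_{i-1}$ is extended by $0$ outside $\left\{0,\ldots,B_{i-1}\right\}$. The crucial observation is that this moving-window sum telescopes under first differences:
\[
A_i\left(j+1\right)-A_i\left(j\right)=A_{i-1}\left(j+1\right)-A_{i-1}\left(j-s_i\right).
\]
I would then argue by induction on $i$, the base case $A_1\equiv1$ on $\left\{0,\ldots,s_1\right\}$ being trivial. Assuming $A_{i-1}$ is symmetric and unimodal about $c:=B_{i-1}/2$, part (1) lets me write $A_{i-1}\left(x\right)=h\left(\left|x-c\right|\right)$ for a nonincreasing $h$ (with $h$ dropping to $0$ past the support), so $A_{i-1}\left(j+1\right)\geq A_{i-1}\left(j-s_i\right)$ exactly when $\left|j+1-c\right|\leq\left|j-s_i-c\right|$. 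Since the two arguments differ by $s_i+1>0$, squaring reduces this to $2j\leq B_i-1$, i.e. $j\leq\left(B_i-1\right)/2$, where $B_i=B_{i-1}+s_i$. Hence $A_i\left(j+1\right)-A_i\left(j\right)\geq0$ for all such $j$, and in particular for $j=0,\ldots,\lfloor B_i/2\rfloor-1$ (which satisfy $j\leq\left(B_i-1\right)/2$ in either parity), giving the nondecreasing half. The nonincreasing half on $\left\{\lceil B_i/2\rceil,\ldots,B_i\right\}$ and the vanishing for $j>B_i$ then follow for free from the symmetry (1), now established for $A_i$ as well, completing the induction.

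The main obstacle is the unimodality step, and within it the only delicate point is the bookkeeping that converts the clean threshold $j\leq\left(B_i-1\right)/2$ coming from the telescoping identity into the stated integer ranges $\lfloor B_i/2\rfloor$ and $\lceil B_i/2\rceil$; this requires separating the parities of $B_i$ but is otherwise routine. Everything else reduces to the elementary fact that a symmetric unimodal function is nonincreasing in the distance from its center, applied to the zero-extended $A_{i-1}$.
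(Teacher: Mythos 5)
Your proposal is correct and follows essentially the same route as the paper: symmetry via the complementation bijection $\left(x_{1},\ldots,x_{i}\right)\mapsto\left(s_{1}-x_{1},\ldots,s_{i}-x_{i}\right)$ (the paper phrases it as swapping rows 1 and 2), and unimodality by induction on $i$ using the first-difference identity $A_{i}\left(j+1\right)-A_{i}\left(j\right)=A_{i-1}\left(j+1\right)-A_{i-1}\left(j-s_{i}\right)$ together with the fact that a symmetric unimodal function is nonincreasing in the distance from its center. The only cosmetic difference is that you obtain this identity uniformly by telescoping the window-sum recurrence with zero-extension, whereas the paper derives it from recurrence (\ref{eq:DPformula1}) via a two-case analysis ($j-1<s_{i}$ versus $j-1\geq s_{i}$); the resulting inequality and threshold computation are the same.
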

\begin{proof}
We start with Property (1). Let $0\leq j\leq\lfloor\frac{B_{i}}{2}\rfloor$. For any assignment of $j$ items in
cells $1,\ldots,i$ in row 1, switching between row 1 and 2 gives an assignment of $B_{i}-j$ items in cells $1,\ldots,i$
in row 1. This provide us a one to one correspondence between assignments of $j$ items in cells $1,\ldots,i$ in
row 1, and assignments of $B_{i}-j$ items in cells $1,\ldots,i$ in row 1. This completes the proof of the first
property.

We now turn to Property (2). For $j>B_{i}$, the cells cannot contain the items, so there are no valid assignments,
therefore $A_{i}\left(j\right)=0$. To complete the proof, it is enough to prove that $A_{i}\left(j\right)$ is
nondecreasing for $j=1,\ldots,\lfloor\frac{B_{i}}{2}\rfloor$. The other part of the statement is immediate by the
symmetry of the function.

We now prove by induction on $i$ that $A_{i}\left(\cdot\right)$ is nondecreasing over $\left\{ 1,...,\lfloor\frac{B_{i}}{2}\rfloor\right\} $.
Considering the base case of $i=1$,we note that $A_{1}\left(\cdot\right)\equiv1$ on $\left\{ 1,\ldots,B_{1}\right\} $,
so the function is nondecreasing in the relevant range. This proves the base case.

The induction hypothesis for $i-1$ is that $A_{i-1}\left(\cdot\right)$ is nondecreasing over $\left\{ 1,...,\lfloor\frac{B_{i-1}}{2}\rfloor\right\} $.
Let $j$ be such that $2\leq j\leq\lfloor\frac{B_{i}}{2}\rfloor$. We need to show that $A_{i}\left(j\right)\geq A_{i}\left(j-1\right)$.

Case 1: $j-1<s_{i}$. By (\ref{eq:DPformula1}), $A_{i}\left(j\right)=A_{i}\left(j-1\right)+A_{i-1}\left(j\right)$.
The proof follows due to the nonnegativity of $A_{i-1}\left(j\right)$.

Case 2: $j-1\geq s_{i}$. By (\ref{eq:DPformula1}), $A_{i}\left(j\right)-A_{i}\left(j-1\right)=A_{i-1}\left(j\right)-A_{i-1}\left(j-1-s_{i}\right)$.
It therefore remains to show that $A_{i-1}\left(j\right)\geq A_{i-1}\left(j-1-s_{i}\right)$. Now:

\begin{eqnarray*}
j\leq\lfloor\frac{B_{i}}{2}\rfloor & \Rightarrow\\
j\leq\frac{B_{i}}{2}+\frac{1}{2} & \Rightarrow\\
2j\leq B_{i-1}+s_{i}+1 & \Rightarrow\\
j-\frac{B_{i-1}}{2}\leq\frac{B_{i-1}}{2}-\left(j-s_{i}-1\right) & \Rightarrow\\
\left|\frac{B_{i-1}}{2}-j\right|\leq\left|\frac{B_{i-1}}{2}-\left(j-s_{i}-1\right)\right|
\end{eqnarray*}
By the symmetry of $A_{i-1}$ around $\frac{B_{i-1}}{2}$ and by the unimodality of $A_{i-1}$, we get that $A_{i-1}\left(j\right)\geq A_{i-1}\left(j-1-s_{i}\right)$
as required. 
\end{proof}
There are two issues that prevent us from using the method of $K-$approximation sets and functions with DP formulation
(\ref{eq:DPformula1}). The first is that the formulation involves subtraction (see Remark \ref{rem:substruct}).
The second is that in the method, the evaluation of $A_{i}\left(\cdot\right)$ needs to rely only on the functions
evaluated before, i.e. on $A_{j}\left(\cdot\right)$ for $j<i$. But this is not the case in this formulation.

Therefore we need to turn to another DP formulation.

\subsection{Second DP formulation}

We introduce now another formulation, that better suits our purpose. The boundary conditions and the objective function
are the same as before:
\begin{equation}
A_{i}\left(j\right)=\sum_{k=0}^{\min\left(j,s_{i}\right)}A_{i-1}\left(j-k\right)\label{eq:DPformula2}
\end{equation}

$A_{n}\left(R\right)$ can be computed in $O\left(nR\max_{1\leq i\leq n}s_{i}\right)$ time.

\emph{Explanation:} For counting the number of combinations to assign $j$ items into $i$ cells, we sum over the
number of items in the $i$'th cell. If $s_{i}\geq j$, the $i'$th cell contains $0,\ldots,j$ items. If $s_{i}<j$,
the $i$'th cell contains $0,\ldots,s_{i}$ items.
\begin{rem}
One can deduce the second formulation from the first formulation by induction on $j$. The base case is immediate.
We show the case when $s_{i}\leq j-1$ (The other case is simple and is therefore omitted) . The induction hypothesis
is $A_{i}\left(j-1\right)=\sum_{k=0}^{s_{i}}A_{i-1}\left(j-1-k\right)$. Now:
\begin{eqnarray*}
A_{i}\left(j\right) & = & A_{i-1}\left(j\right)+A_{i}\left(j-1\right)-A_{i-1}\left(j-1-s_{i}\right)\\
 & = & \sum_{k=0}^{s_{i}}A_{i-1}\left(j-1-k\right)+A_{i-1}\left(j\right)-A_{i-1}\left(j-1-s_{i}\right)\\
 & = & \sum_{k=0}^{s_{i}-1}A_{i-1}\left(j-1-k\right)+A_{i-1}\left(j\right)\\
 & = & \sum_{k=1}^{s_{i}}A_{i-1}\left(j-k\right)+A_{i-1}\left(j\right)\\
 & = & \sum_{k=0}^{s_{i}}A_{i-1}\left(j-k\right)
\end{eqnarray*}

\end{rem}
This formulation does not fit the method of $K-$approximation sets and functions, since $\min\left(s_{i},j\right)$
could be of order $O\left(R+s_{i}\right)$, i.e. exponential in the input size. Thus we need to turn to a third
DP formulation.

\subsection{Third DP formulation}

The difficulty which arises in formulation (\ref{eq:DPformula2}) also arises in \cite{key-4} in the context of
counting integer knapsack solutions: Given $n$ elements with nonnegative integer weights $w_{1},\ldots,w_{n}$,
an integer capacity $C$, and positive integer ranges $u_{1},\ldots,u_{n}$, find the cardinality of the set of
solutions $\left\{ x\in\mathbb{Z}^{+}|\sum_{i=1}^{n}w_{i}x_{i}\leq C,\,0\leq x_{i}\leq u_{i}\right\} $. The following
arguments are very similar to the ones in \cite{key-4}. 

We evaluate $A_{i}\left(\cdot\right)$ only over $\left\{ 0,\ldots,\lfloor\frac{B_{i}}{2}\rfloor\right\} $, where
it is nondecreasing. The approximation on the entire domain is clear by Proposition \ref{prop:contingency struct}
(see the details in Algorithm 3). 

We next introduce the function \textsc{CompressContingency}, which is a version of the function Compress in \cite{key-4}.

\begin{algorithm}[h]
\protect\caption{Returns a step-wise $K-$approximation of $\varphi$}

\begin{enumerate}
\item \textbf{Function }\textbf{\textsc{CompressContingency}}\textbf{$\left(\varphi,K,B_{i}\right)$}
\item obtain a $K-$approximation set $W$ of $\varphi$ on $\left\{ 0,...,\frac{\lfloor B_{i}\rfloor}{2}\right\} $
\item Let $\hat{\varphi}$ be the approximation of $\varphi$ induced by $W$
\item Let $\hat{\varphi}\left(j\right)=\hat{\varphi}\left(B_{i}-j\right)$ for $j=\lceil\frac{B_{i}}{2}\rceil,...,B_{i}$,
and $\hat{\varphi}\left(j\right)\equiv0$ for $j>B_{i}$
\item \textbf{return $\hat{\varphi}$}\end{enumerate}
\end{algorithm}

The next proposition is similar to Proposition 2.2 in \cite{key-4}. It is deduced by Propositions \ref{prop:4.5}-\ref{prop:5.1}
and \ref{prop:contingency struct} above.
\begin{prop}
Let $K_{1},K_{2}\geq1$ be real numbers, $M>1$ be an integer, and let $\varphi:\left[0,...,B\right]\rightarrow\left[0,...,M\right]$
be a function with structure as in Proposition 3.1. Let $\bar{\varphi}$ be a $K_{2}-$approximation function of
$\varphi$. Then function $CompressContingency\left(\bar{\varphi},K_{1},B_{i}\right)$ returns in $O\left(\left(1+t_{\bar{\varphi}}\right)\left(\log_{K_{1}}M\log B\right)\right)$
time a piecewise step function $\hat{\varphi}$ with structure as in Proposition \ref{prop:contingency struct},
with $O\left(\log_{K_{1}}M\right)$ pieces, which $K_{1}K_{2}-$approximates $\varphi$. The query time of $\hat{\varphi}$
is $O\left(\log\log_{K_{1}}M\right)$ if it is sorted in a sorted array $\left\{ \left(x,\hat{\varphi}\right)|x\in W\right\} $. 
\end{prop}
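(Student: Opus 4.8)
The plan is to dissect the three claims of the proposition --- the approximation ratio, the structural properties, and the resource bounds (number of pieces, construction time, query time) --- and to verify each using the machinery of Propositions \ref{prop:4.5}, \ref{prop:4.6}, \ref{prop:5.1} and \ref{prop:6.2}, together with the structural Proposition \ref{prop:contingency struct}. The first thing I would record is that on the first half $\left\{0,\ldots,\lfloor B_i/2\rfloor\right\}$ the function $\varphi$ is nondecreasing by Proposition \ref{prop:contingency struct}(2), and that $\bar{\varphi}$ is likewise nondecreasing there (otherwise the call in line 2 is not well defined; in the algorithm's use this holds because $\bar{\varphi}$ carries the structure of Proposition \ref{prop:contingency struct}). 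This is exactly what makes the call to \textsc{ApxSet} on $\bar{\varphi}$ in line 2 legitimate. Note that one must apply \textsc{ApxSet} to $\bar{\varphi}$ and not to $\varphi$, since $\bar{\varphi}$ is the succinct object one can actually query.

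For the approximation ratio I would argue in two stages. By Proposition \ref{prop:4.6}, the set $W$ returned in line 2 is a $K_1$-approximation set of $\bar{\varphi}$ on the first half. Invoking Proposition \ref{prop:6.2} with $\varphi_1=\bar{\varphi}$ (a $K_2$-approximation of $\varphi_2=\varphi$) and $W_1=W$, the induced function $\hat{\varphi}$ is a $K_1K_2$-approximation of $\varphi$ on $\left\{0,\ldots,\lfloor B_i/2\rfloor\right\}$. I would then transfer this bound to the whole domain using the symmetry in Proposition \ref{prop:contingency struct}(1): for $j$ in the second half the reflected definition in line 4 gives $\hat{\varphi}(j)=\hat{\varphi}(B_i-j)$, while $\varphi(j)=\varphi(B_i-j)$, so the $K_1K_2$ estimate carries over verbatim; for $j>B_i$ both $\varphi(j)$ and $\hat{\varphi}(j)$ vanish. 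Hence $\hat{\varphi}$ is a $K_1K_2$-approximation of $\varphi$ on all of $\left\{0,\ldots,B\right\}$.

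For the structure and resource bounds I would proceed as follows. By Proposition \ref{prop:4.5}, the induced $\hat{\varphi}$ is nondecreasing on the first half; the reflection of line 4 then makes it symmetric about $B_i/2$, hence nonincreasing on the second half and identically zero beyond $B_i$ --- precisely the shape demanded by Proposition \ref{prop:contingency struct}. The number of pieces is $|W|=O\left(1+\log_{K_1}\bar{\varphi}^{\max}\right)$ by Proposition \ref{prop:4.6}; since $\bar{\varphi}^{\max}\le K_2\varphi^{\max}\le K_2 M$, this collapses to $O\left(\log_{K_1}M\right)$ once the additive $\log_{K_1}K_2$ term is absorbed. The running time is dominated by the single \textsc{ApxSet} call, whose cost by Proposition \ref{prop:4.6} is $O\left(t_{\bar{\varphi}}\left(1+\log_{K_1}\bar{\varphi}^{\max}\right)\log|D|\right)$ with $|D|\le B$; together with the estimate on $\bar{\varphi}^{\max}$ this yields the claimed $O\left(\left(1+t_{\bar{\varphi}}\right)\log_{K_1}M\,\log B\right)$, the symmetric and zero extensions of line 4 being stored implicitly and resolved at query time. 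Finally, Proposition \ref{prop:4.5} gives query time $O\left(\log|W|\right)=O\left(\log\log_{K_1}M\right)$ on the first half, and the reflection $j\mapsto B_i-j$ (resp. the test $j>B_i$) adds only constant overhead elsewhere.

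The main obstacle I anticipate is not any single hard estimate but the bookkeeping of composing the two approximation factors correctly: one must run \textsc{ApxSet} on $\bar{\varphi}$, justify the call's well-definedness via monotonicity over the first half, and use Proposition \ref{prop:6.2} --- rather than a naive pointwise argument, which Remark \ref{rem:substruct} warns against --- to legitimately multiply $K_1$ by $K_2$. The secondary delicate point is confirming that the symmetric reflection simultaneously preserves both the $K_1K_2$ ratio and the unimodal structure across the midpoint, and that the cardinality and time bounds genuinely reduce to the stated $\log_{K_1}M$ form via $\bar{\varphi}^{\max}=O(M)$.
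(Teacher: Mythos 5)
Your proposal is correct and takes essentially the same approach as the paper: the paper's entire proof is the single sentence that the proposition ``is deduced by Propositions \ref{prop:4.5}--\ref{prop:5.1} and \ref{prop:contingency struct} above,'' and your argument is exactly that deduction written out in full (your invocation of Proposition \ref{prop:6.2} is just the composition of Propositions \ref{prop:4.5} and \ref{prop:5.1}(2)). If anything you are more careful than the paper, since you explicitly flag the monotonicity of $\bar{\varphi}$ on $\left\{ 0,\ldots,\lfloor B_{i}/2\rfloor\right\} $ needed for the \textsc{ApxSet} call to be well defined, the transfer of the ratio to the second half via the symmetry of Proposition \ref{prop:contingency struct}(1), and the reduction of $\log_{K_{1}}\bar{\varphi}^{\max}$ to $O\left(\log_{K_{1}}M\right)$.
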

We now give a third dynamic programming formulation, which is pseudo-polynomial in the size of $R$ only. Denote
by $m_{i}\left(j\right)=\min\left(j,s_{i}\right),\, w_{i}=1$ for every $i=1,\ldots,n.$ Then apart from the fifth
equation, our DP formulation is identical to formulation (2) in \cite{key-4}.

An explanation for the following formulation is: In (\ref{eq:DPformula2}) the evaluation of $A_{i}$$\left(j\right)$
is done at once by summing over all the possible values of the number of items in the $i$'th cell. In the following
formulation we break this evaluation into $\lfloor\log m_{i}\left(j\right)\rfloor+1$ separate simple evaluations:
In the $\ell$'th evaluation we look at the $\ell$'th digit of the binary representation of $m_{i}\left(j\right)$
and consider it to be 0 or 1, i.e. consider to put or not to put $2^{\ell-1}$ items in the $i$'th cell. For every
of these options we can calculate the number of contingency tables when using cells $1,\ldots,i$ only, and in the
$i$'th cell there are no more than $s_{i}\mod2^{\ell}$ items. After considering this two options for all the digits
in the binary representation of $m_{i}\left(j\right)$, we get $A_{i}\left(j\right)$.

For doing this Halman introduces the idea of \emph{binding constraints} \cite{key-4}. For $\ell\geq1$ let $z_{i,\ell,r}\left(j\right)$
be the number of solutions for contingency tables with with 2 rows, where $j$ items are in the first row, that
use cells $\left\{ 1,\ldots i\right\} $, put no more than $s_{i}\mod2^{\ell}$ items in the $i$'th cell, and no
more than $s_{k}$ items in the $k$'th cell, for $k=1,\ldots,i-1$. In this way, the future assignments can affect
on the current assignment: The number of items we can assign in the $\ell$'th step into the $i$'th cell affected
by the number of items will assign in the next steps. i.e. the number of assignments in the $\ell$'th step depends
on the question if the number of items will assign in the next steps will leave enough capacity for assign as many
items as we want, or it is cause a restriction on the number of items in the current step. We need to consider the
both options, and this is done by the third index of $z_{i,\ell,r}\left(j\right)$: If $r=0$ then the constraint
of having no more than $s_{i}$ items in the $i$'th cell is assumed to be non binding (i.e. we assume there is
enough capacity for $2^{\ell}-1$ more items in the $i$'th cell). If, on the other hand, $r=1$ then this constraint
may be binding. E.g. if $s_{i}=5$ and $\ell=2$, and there are already 4 items in cell $i$, we are in the case
of $r=1$, since there is not remaining capacity for $2^{2}-1=3$ additional items. If the $i$'th cell is empty,
we are in the case of $r=0$, since there is enough capacity for 3 more items.

Let us introduce some definitions before giving the formal recursion: Let $\log^{+}\left(x\right)$ equal $\log\left(x\right)$
for $x\geq1$ and 0 otherwise. Let $\mbox{msb}\left(x,i\right):=\lfloor\log\left(x\mbox{ mod }2^{i}\right)\rfloor+1$.
$\mbox{msb}\left(x,i\right)$ is therefore the most significant 1-digit of $\left(x\mbox{ mod }2^{i}\right)$ if
$\left(x\mbox{ mod }2^{i}\right)>0$, and is $-\infty$ otherwise. E.g., $\mbox{msb}\left(5,2\right)=1$ and $\mbox{msb}\left(4,1\right)=-\infty$.
\begin{eqnarray}
z_{i,\ell,0}\left(j\right) & =z_{i,\ell-1,0}\left(j\right)+z_{i,\ell-1,0}\left(j-2^{\ell-1}\right)\label{eq:DP3.1}\\
z_{i,\ell,1}\left(j\right) & =z_{i,\ell-1,0}\left(j\right)+z_{i,\mbox{msb}\left(s_{i},\ell-1\right),0}\left(j-2^{\ell-1}\right)\label{eq:DP3.2}\\
z_{i,1,r}\left(j\right) & =z_{i-1,\lfloor\log^{+}m_{i-1}\left(j\right)\rfloor+1,1}\left(j\right)+\nonumber \\
 & +z_{i-1,\lfloor\log^{+}m_{i-1}\left(j-1\right)\rfloor+1,1}\left(j-1\right)\label{eq:DP3.3}\\
z_{i,-\infty,1}\left(j\right) & =z_{i-1,\lfloor\log^{+}m_{i-1}\left(j\right)\rfloor+1,1}\left(j\right)\label{eq:DP3.4}\\
z_{1,\ell,r}\left(j\right) & =1\label{eq:DP3.5}\\
z_{i,\ell,r}\left(j\right) & =0 & j<0\label{eq:DP3.6}
\end{eqnarray}
where $r=0,1$ , $i=2,\ldots,n$ , $\ell=2,\ldots,\lfloor\log^{+}m_{i}\left(j\right)\rfloor+1$ , and $j=0,\ldots,R$
unless otherwise specified. The objective function is $z_{n,\lfloor\log s_{n}\rfloor+1,1}\left(R\right).$ Denote
$S=\max_{1\leq i\leq n}s_{i}$, so the complexity of this pseudo-polynomial algorithm is $O\left(nR\log S\right)$.

We now turn to a more detailed explanation of formulations (\ref{eq:DP3.1})-(\ref{eq:DP3.6}). In the case of equation
(\ref{eq:DP3.1}) we assume that there is enough capacity for putting $2^{\ell}-1$ more items in the $i$'th cell,
and therefore, in both cases of the values of the $\ell$'th bit, there is still enough capacity in the $i$'th
cell for as many items as we want.

In equation (\ref{eq:DP3.2}) we assume the constraint of having no more than $s_{i}$ items in the $i$'th cell
may be binding. So when putting $2^{\ell-1}$ items in the $i$'th cell, we have to take the constraint into account.
If we do not put $2^{\ell-1}$ items, clearly the constraint will not be binding anymore.

The remaining four equations deal with boundary conditions: Equation (\ref{eq:DP3.3}) deals with the case of $\ell=1$,
i.e. the possibility of having an odd number of items. Equation (\ref{eq:DP3.4}) can be called by (\ref{eq:DP3.2})
when there are exactly $s_{i}$ items in the $i$'th cell, or by (\ref{eq:DP3.3}) when $m_{i}\left(j\right)=0$,
i.e. there is not enough capacity to put even a single item. Equation (\ref{eq:DP3.5}) deals with the base case
of one item, and the last equation deals with the boundary condition that there is not enough capacity in the $i$'th
cell.

\subsection{Algorithm}

Now, we can proceed exactly as in section 3.2 of \cite{key-4}, and use the algorithm $\mbox{CountIntegerKnapsack}\left(w,C,u,\epsilon\right)$
with the following notations:
\begin{itemize}
\item $w_{i}=1$ for every $i$
\item $C=R$
\item $u_{i}=s_{i}$
\item Denote $S=\max_{1\leq i\leq n}s_{i}$. Then $U=S$.
\item Include $B_{i}$ in the input of the algorithm
\item Use CompressContingency instead of Compress
\end{itemize}
With this notations, the algorithm analysis is also valid (using Proposition \ref{prop:contingency struct} above
instead of Proposition 2.2 in \cite{key-3}). The running time is $O\left(\frac{\left(n\log S\right)^{3}}{\epsilon}\log\frac{n\log S}{\epsilon}\log R\right)$
.

\subsection{Comparison with other known algorithms}

In Section 3 of \cite{key-2} Dyer and Greenhill introduce a random algorithm based on mixing Markov chains to approximate
the number of contingency tables with two rows. We now outline the analysis of its running time. By page 269 of
\cite{key-2}, let $d=\sum_{k=3}^{n}\lceil\log s_{k}\rceil$, $M=\lceil150e^{2}\frac{d^{2}}{\epsilon^{2}}\log\frac{3d}{\delta}\rceil=O\left(\frac{d^{2}}{\epsilon^{2}}\log\frac{d}{\delta}\right)$
, where the approximation ratio is guaranteed in probability of $1-\delta$, and ,$T=\tau\left(\frac{\epsilon}{15de^{2}}\right)$,
where $\tau\left(\epsilon\right)$ is the mixing time of the Markov chain. By page 270 in \cite{key-2}, the running
time of the algorithm is $O\left(dMT\right)$.

Now, according to Theorem 4.1 in \cite{key-2}, $\tau\left(\epsilon\right)=O\left(n^{2}\log\frac{N}{\epsilon}\right)$.
$M=O\left(\frac{d^{2}}{\epsilon^{2}}\log d\right)$ , and by page 269 $d=O\left(n\log N\right)$. So the running
time is $O\left(n\log\left(N\right)\frac{d^{2}}{\epsilon^{2}}\log\left(d\right)n^{2}\log\frac{dN}{\epsilon}\right)=O\left(\frac{n^{5}}{\epsilon^{2}}\log^{3}\left(N\right)\log\left(n\log N\right)\log\left(\frac{Nn\log N}{\epsilon}\right)\right)$.\textcolor{red}{{}
}This algorithm is slower than our's by a factor of $\frac{n^{2}}{\epsilon}$, up to log terms.

Dyer introduces a randomized algorithm for the general case of contingency tables with $m$ rows, which is strongly
polynomial \cite{key-11}. The running time of this algorithm is $O\left(n^{4m+1}+\frac{n^{3m}}{\epsilon^{2}}\right)$.
For $m=2$ the running time is $O\left(n^{9}+\frac{n^{6}}{\epsilon^{2}}\right)$. Our algorithm is faster by a factor
of at least $O\left(\frac{n^{3}}{\epsilon}\right)$ up to log terms, and is deterministic, but not strongly polynomial.

Gopalan et al give an FPTAS for contingency tables with $m$ rows \cite{key-3}. This algorithm is not strongly
polynomial, and it is slower than Dyers in both $n$ and $\frac{1}{\epsilon}$. In addition, this algorithm is ``fairly
intricate and involve a combination of Dyer's FPRAS for counting contingency tables and our algorithms for counting
general integer knapsack solutions and counting knapsack solutions under small space sources'' (This quotation
is taken from Appendix B in \cite{key-3}). Our algorithm is relatively simple, and runs faster by a factor of at
least $O\left(\frac{n^{12}}{\epsilon}\right)$, but slower by a factor of at most $\log^{2}R$.

\section{Strongly polynomial algorithm for counting 0/1 knapsack\label{sec:0/1-knapsack--strongly}}

A strongly polynomial algorithm for solving the problem of counting 0/1 knapsack solutions introduced by Štefankovi\v{c}
et al \cite{key-6}. We introduce a simple alternative algorithm, with the same running time (up to log term).

\subsection{$K-$approximation set of increasing points}

Let $w_{1},..,w_{n},C$ be an instance of 0/1 knapsack. Let $s_{i}\left(j\right)$ be as defined in (\ref{eq:knapsack set}). 

We want to approximate $s_{n}\left(C\right)$ by a strongly polynomial algorithm. By using the method of $K-$approximation
sets and functions, we can construct an $FPTAS$ for calculating $s_{n}\left(C\right).$ This algorithm runs in
time $O\left(\frac{n^{3}}{\epsilon}\log\frac{n}{\epsilon}\log C\right),$ see Appendix \ref{sec:Not-strongly-polynomial}.
Note that the running time depends on $\log C,$ because the cardinality of the domain of the function is $C.$
In order to avoid this dependence, we first look at $s_{i}$ restricted to the points where it is strictly increasing.
The domain of the restricted function (we call it $s_{i}^{\mbox{inc}}$), affected by the number of the strictly
increasing points of $s_{i}\left(\cdot\right),$ is of size no more than $2^{n}.$ For $s_{i}^{\mbox{inc}}$, we
can construct a $K-$approximation set in strongly polynomial running time. Then we use the set we found to approximate
the original function $s_{i}\left(\cdot\right)$.

We start with some definitions to formalize the definition of $s_{i}^{\mbox{inc}}$. We proceed with a proposition
that demonstrates how we can ``convert'' a $K-$approximation set for $s_{i}^{\mbox{inc}}$, to a $K-$approximation
set for $s_{i}.$
\begin{defn}
\label{def:Phi-inc}Let $\varphi:\left\{ A,...,B\right\} \rightarrow\mathbb{Z}^{+}$ be a nondecreasing function.

Let $\mbox{StirctIn\ensuremath{c_{\varphi}}}=\left\{ A,B\right\} \bigcup\left\{ i\in\mathbb{Z}|\ensuremath{A+1\leq i\leq B}\wedge\varphi\left(i\right)>\varphi\left(i-1\right)\right\} $
be the set of points where $\varphi$ is strictly increasing, and the edge points. Let $\mbox{Inc}_{\varphi}$ be
an arbitrary set that contains $\mbox{StrictInc}_{\varphi},$ i.e. $\mbox{StrictInc}_{\varphi}\subseteq\mbox{Inc}_{\varphi}\subseteq\left\{ A,...,B\right\} .$

Let us denote its elements by $A=k_{1}<k_{2}<...<k_{\left|\mbox{Inc}_{\varphi}\right|}=B$.

Let us define $\varphi^{\mbox{dom}}:\left\{ 1,...,|\mbox{In\ensuremath{c_{\varphi}}}|\right\} \rightarrow\left\{ A,...,B\right\} $
by $\varphi^{\mbox{dom}}\left(j\right)=k_{j}$, i.e. $\varphi^{\mbox{dom}}\left(j\right)$ is the $j$'th smallest
element in $\mbox{Inc}_{\varphi}.$ Note that this element belongs to the domain of $\varphi$, i.e. to $\left\{ A,...,B\right\} .$

Let us define $\varphi^{\mbox{inc}}:\left\{ 1,...,|\mbox{In\ensuremath{c_{\varphi}}}|\right\} \rightarrow\mathbb{Z}^{+}$
by $\varphi^{\mbox{inc}}\left(j\right)=\varphi\left(\varphi^{\mbox{dom}}\left(j\right)\right),$ i.e. the value
of $\varphi$ on the $j$'th smallest element of $\mbox{In\ensuremath{c_{\varphi}}}$. \end{defn}
\begin{rem}
Formally, as $\varphi^{\mbox{dom}}$ and $\varphi^{\mbox{inc}}$ are both based on the function $\varphi$ and the
set $\mbox{Inc}_{\varphi}$, we should have denoted them by $\varphi_{\mbox{Inc}_{\varphi}}^{\mbox{dom}}$ and $\varphi_{\mbox{Inc}_{\varphi}}^{\mbox{dom}}$
respectively. For simplicity, we omit the dependence on the set $\mbox{Inc}_{\varphi}.$\end{rem}
\begin{defn}
Let $W^{\mbox{inc}}=\left\{ 1=w_{1}<...<w_{r}=|\mbox{Inc}_{\varphi}|\right\} $ be a $K-$approximation set of $\varphi^{\mbox{inc}}$.

Let $\mbox{dom}\left(W^{\mbox{inc}}\right)=\left\{ \varphi^{\mbox{dom}}\left(w_{i}\right)|w_{i}\in W^{\mbox{inc}}\right\} $
be the elements in $\left\{ A,...,B\right\} $ referred to in $W^{\mbox{inc}}.$ Let us denote them by $\left\{ j_{1}<...<j_{r}\right\} $
where $j_{i}=\varphi^{\mbox{dom}}\left(w_{i}\right)$.

For every subset $S=\left\{ s_{1},...,s_{|S|}\right\} \subseteq\left\{ A,...,B\right\} $ let 
\[
\mbox{pad}\left(S\right)=\left\{ s_{i},s_{i}-1|2\leq i\leq|S|\right\} \bigcup\left\{ s_{1}\right\} 
\]
 consist of the set $S,$ and every of its elements padded with its previous element in $\left\{ A,...,B\right\} .$\end{defn}
\begin{prop}
\label{prop:inc K apx}Let $K\geq1$, and let $\varphi:\left\{ A,...,B\right\} \rightarrow\mathbb{Z}^{+}$ be a
nondecreasing function. Let $W^{\mbox{inc}}=\left\{ w_{1},...,w_{r}\right\} $ be a $K-$approximation set of $\varphi^{\mbox{inc}}.$
Let $W=\mbox{pad}\left(\mbox{dom}\left(W^{\mbox{inc}}\right)\right)$. Then $W$ is a $K-$approximation set of
$\varphi$.\end{prop}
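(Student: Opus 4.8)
The plan is to verify directly that $W=\mbox{pad}\left(\mbox{dom}\left(W^{\mbox{inc}}\right)\right)$ satisfies the two requirements of Definition \ref{def:K apx set}: that it contains the endpoints and lies in $\left\{A,\ldots,B\right\}$, and that $\varphi\left(w_{j+1}\right)\leq K\varphi\left(w_{j}\right)$ for every pair of consecutive elements $w_{j}<w_{j+1}$ of $W$ with $w_{j+1}-w_{j}>1$. The structural fact I would establish first, and on which everything rests, is that $\varphi$ is \emph{constant between consecutive elements of} $\mbox{Inc}_{\varphi}$. Indeed, since $\mbox{StrictInc}_{\varphi}\subseteq\mbox{Inc}_{\varphi}$, any point $p$ lying strictly between two consecutive elements $k_{m}<k_{m+1}$ of $\mbox{Inc}_{\varphi}$ cannot satisfy $\varphi\left(p\right)>\varphi\left(p-1\right)$, for otherwise $p\in\mbox{StrictInc}_{\varphi}\subseteq\mbox{Inc}_{\varphi}$, contradicting consecutiveness; chaining these equalities gives $\varphi\left(k_{m+1}-1\right)=\varphi\left(k_{m}\right)$. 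Writing $j_{i}=\varphi^{\mbox{dom}}\left(w_{i}\right)=k_{w_{i}}$ and applying this with $m=w_{i}-1$ yields the identity $\varphi\left(j_{i}-1\right)=\varphi\left(k_{w_{i}-1}\right)$ that drives the rest of the argument.

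Next I would describe $W$ explicitly. With $\mbox{dom}\left(W^{\mbox{inc}}\right)=\left\{j_{1}<\ldots<j_{r}\right\}$, the choices $w_{1}=1$ and $w_{r}=\left|\mbox{Inc}_{\varphi}\right|$ force $j_{1}=k_{1}=A$ and $j_{r}=B$, so $W=\left\{A\right\}\bigcup\bigcup_{i=2}^{r}\left\{j_{i}-1,j_{i}\right\}$; this immediately gives $A,B\in W\subseteq\left\{A,\ldots,B\right\}$. Sorting $W$, its consecutive pairs split into two kinds: the within-block pairs $\left(j_{i}-1,j_{i}\right)$, all of distance exactly $1$, and the between-block pairs $\left(j_{i-1},j_{i}-1\right)$ (using $j_{i}-1\geq j_{i-1}$). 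Hence the only pairs that can have distance greater than $1$, and therefore the only ones imposing a condition, are the between-block pairs $\left(j_{i-1},j_{i}-1\right)$.

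The remaining task is to verify $\varphi\left(j_{i}-1\right)\leq K\varphi\left(j_{i-1}\right)$ for each such pair, splitting on whether $w_{i}$ and $w_{i-1}$ are adjacent as indices. If $w_{i}=w_{i-1}+1$, the constancy identity gives $\varphi\left(j_{i}-1\right)=\varphi\left(k_{w_{i}-1}\right)=\varphi\left(k_{w_{i-1}}\right)=\varphi\left(j_{i-1}\right)$, so the inequality is trivial as $K\geq1$. If $w_{i}-w_{i-1}>1$, then the $K$-approximation property of $W^{\mbox{inc}}$ applies to the pair $w_{i-1},w_{i}$, giving $\varphi\left(j_{i}\right)=\varphi^{\mbox{inc}}\left(w_{i}\right)\leq K\varphi^{\mbox{inc}}\left(w_{i-1}\right)=K\varphi\left(j_{i-1}\right)$, and monotonicity of $\varphi$ then yields $\varphi\left(j_{i}-1\right)\leq\varphi\left(j_{i}\right)\leq K\varphi\left(j_{i-1}\right)$. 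In both cases the condition holds, which together with the endpoint check completes the proof.

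The conceptual heart, and the step I expect to require the most care, is seeing why the padding is necessary and why it suffices. The delicate case is $w_{i}=w_{i-1}+1$: there $\varphi$ is flat on $\left[j_{i-1},j_{i}-1\right]$ but may jump sharply at $j_{i}$, so the raw set $\left\{j_{1},\ldots,j_{r}\right\}$ could violate the $K$-condition on the pair $\left(j_{i-1},j_{i}\right)$. Inserting $j_{i}-1$ relocates that jump onto a gap of size $1$, which Definition \ref{def:K apx set} leaves unconstrained, while the surviving controlled gap $\left(j_{i-1},j_{i}-1\right)$ sits over a region on which $\varphi$ does not increase. Pinning down this case analysis, and confirming that no consecutive pair of $W$ other than the between-block pairs can have distance exceeding $1$, is where the argument must be handled precisely.
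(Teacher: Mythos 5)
Your proof is correct and takes essentially the same route as the paper's: the same identification of the only gap-prone consecutive pairs as the between-block pairs $\left(j_{i-1},j_{i}-1\right)$, the same case split on whether $w_{i}-w_{i-1}>1$, using the $K-$approximation property of $W^{\mbox{inc}}$ together with monotonicity in one case, and constancy of $\varphi$ between consecutive elements of $\mbox{Inc}_{\varphi}$ in the other. Your write-up is merely more explicit than the paper's (spelling out the sorted structure of $\mbox{pad}\left(\mbox{dom}\left(W^{\mbox{inc}}\right)\right)$, the endpoint check, and the constancy fact that the paper asserts without proof), but the underlying argument is identical.
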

\begin{proof}
Suppose $W=\left\{ A=a_{1}<a_{2}<...<a_{\ell}=B\right\} .$ In order to prove that $W$ is a $K-$approximation
set of $\varphi,$ we need to show that $\varphi\left(a_{j+1}\right)\leq K\varphi\left(a_{j}\right)$ for each $j=1,...,\ell-1$
that satisfies $a_{j+1}-a_{j}>1$.

Suppose $a_{j+1}-a_{j}>1$. There exists $k\in\left\{ 1,...,r-1\right\} $ such that $a_{j+1}=\varphi^{\mbox{dom}}\left(w_{k+1}\right)-1$,
and $a_{j+2}=\varphi^{\mbox{dom}}\left(w_{k+1}\right).$ 

If $w_{k+1}-w_{k}>1,$ then $\varphi^{\mbox{inc}}\left(w_{k+1}\right)\leq K\varphi^{\mbox{inc}}\left(w_{k}\right)$
because $W^{\mbox{inc}}$ is a $K-$approximation set of $\varphi^{\mbox{inc}}$. By definition of $\varphi^{\mbox{inc}},$
$\varphi\left(a_{j+2}\right)=\varphi^{\mbox{inc}}\left(w_{k+1}\right)\leq K\varphi^{\mbox{inc}}\left(w_{k}\right)=K\varphi\left(a_{j}\right)$,
and by the fact that $\varphi$ is nondecreasing, $\varphi\left(a_{j+1}\right)\leq\varphi\left(a_{j+2}\right)$.
That implies $\varphi\left(a_{j+1}\right)\leq K\varphi\left(a_{j}\right)$.

If $w_{k+1}-w_{k}=1$, then there are no increasing points between $a_{j}$ and $a_{j+2}$, which implies $\varphi\left(a_{j+1}\right)=\varphi\left(a_{j}\right)$,
and this indicates $\varphi\left(a_{j+1}\right)\leq K\varphi\left(a_{j}\right)$ .
\end{proof}
The next proposition is based on Proposition \ref{prop:4.6}.
\begin{prop}
\label{prop:knapsack}Let $\varphi:\left\{ A,...,B\right\} \rightarrow\mathbb{Z}^{+}$ be a nondecreasing function.
Let $\bar{\varphi}$ be a nondecreasing $L-$approximation function of $\varphi$ ($L$>1). Let $\mbox{Inc}_{\bar{\varphi}}$
a set which contains $\mbox{StrictInc}_{\bar{\varphi}},$ and denote its elements by $k_{1}=A<k_{2}<...<k_{m}=B.$
Let $W^{\mbox{inc}}$ be the output of function \textsc{ApxSet}\emph{ }for given parameters $\bar{\varphi}^{\mbox{inc}},\left\{ 0,...,m\right\} $,
and $K>1$. Let $W=\mbox{pad}\left(\mbox{dom}\left(W^{\mbox{inc}}\right)\right).$ Let $\hat{\bar{\varphi}}$ be
the approximation of $\bar{\varphi}$ induced by $W$. Let $t_{\bar{\varphi}},t_{\bar{\varphi}^{\mbox{dom}}}$ be
an upper bound on the time needed to evaluate $\bar{\varphi},\bar{\varphi}^{\mbox{dom}}$ respectively. Then $\hat{\bar{\varphi}}$
is a nondecreasing $KL-$approximation function of $\varphi$, the computation of $W$ takes $O\left(t_{\bar{\varphi}^{\mbox{dom}}}t_{\bar{\varphi}}\left(1+\log_{K}\varphi^{\max}\right)\log|\mbox{In\ensuremath{c_{\bar{\varphi}}}}|\right)$
time, and $|W|=O\left(1+\log_{K}\varphi^{\max}\right)$.\end{prop}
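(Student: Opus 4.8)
The plan is to obtain all three conclusions by chaining together the propositions already established, since the genuinely nontrivial step---that padding the domain-images of a $K$-approximation set of $\bar{\varphi}^{\mbox{inc}}$ yields a $K$-approximation set of $\bar{\varphi}$---has already been isolated in Proposition \ref{prop:inc K apx}. For the approximation ratio I would argue in four links. First, by Proposition \ref{prop:4.6} applied to $\bar{\varphi}^{\mbox{inc}}$ on $\{0,\dots,m\}$ with ratio $K$, the output $W^{\mbox{inc}}$ of \textsc{ApxSet} is a $K$-approximation set of $\bar{\varphi}^{\mbox{inc}}$. Second, Proposition \ref{prop:inc K apx} (with $\varphi$ there set to $\bar{\varphi}$) turns this into the statement that $W=\mbox{pad}(\mbox{dom}(W^{\mbox{inc}}))$ is a $K$-approximation set of $\bar{\varphi}$. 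Third, Proposition \ref{prop:4.5} gives that the induced function $\hat{\bar{\varphi}}$ is a \emph{nondecreasing} $K$-approximation function of $\bar{\varphi}$. Finally, since $\bar{\varphi}$ is an $L$-approximation of $\varphi$, Proposition \ref{prop:6.2} (approximation of approximation sets, with $\varphi_{1}=\bar{\varphi}$, $\varphi_{2}=\varphi$, $K_{1}=K$, $K_{2}=L$) yields that $\hat{\bar{\varphi}}$ is a nondecreasing $KL$-approximation of $\varphi$, as claimed.

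For the cardinality bound I would again start from Proposition \ref{prop:4.6}, which gives $|W^{\mbox{inc}}|=O(1+\log_{K}(\bar{\varphi}^{\mbox{inc}})^{\max})$. Here $(\bar{\varphi}^{\mbox{inc}})^{\max}=\bar{\varphi}^{\mbox{inc}}(m)=\bar{\varphi}(k_{m})=\bar{\varphi}(B)=\bar{\varphi}^{\max}$, and since $\bar{\varphi}$ is an $L$-approximation of $\varphi$ we have $\bar{\varphi}^{\max}\le L\varphi^{\max}$; treating $L$ as a fixed constant this gives $\log_{K}(\bar{\varphi}^{\mbox{inc}})^{\max}=O(1+\log_{K}\varphi^{\max})$. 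The operator $\mbox{pad}$ adjoins at most the predecessor of each element, so $|W|\le 2|\mbox{dom}(W^{\mbox{inc}})|-1=2|W^{\mbox{inc}}|-1=O(|W^{\mbox{inc}}|)$, and the bound $|W|=O(1+\log_{K}\varphi^{\max})$ follows.

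For the running time, the dominant cost is the single call to \textsc{ApxSet}, whose cost by Proposition \ref{prop:4.6} is $O(t_{\bar{\varphi}^{\mbox{inc}}}(1+\log_{K}(\bar{\varphi}^{\mbox{inc}})^{\max})\log|D|)$ with $D=\{0,\dots,m\}$, so $\log|D|=O(\log m)=O(\log|\mbox{Inc}_{\bar{\varphi}}|)$. The key bookkeeping step, which I expect to be the only real point requiring care, is to bound $t_{\bar{\varphi}^{\mbox{inc}}}$: evaluating $\bar{\varphi}^{\mbox{inc}}(j)=\bar{\varphi}(\bar{\varphi}^{\mbox{dom}}(j))$ costs one evaluation of $\bar{\varphi}^{\mbox{dom}}$ followed by one of $\bar{\varphi}$, i.e. $t_{\bar{\varphi}^{\mbox{dom}}}+t_{\bar{\varphi}}=O(t_{\bar{\varphi}^{\mbox{dom}}}t_{\bar{\varphi}})$ since both quantities are at least $1$. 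Substituting this together with the identity $(\bar{\varphi}^{\mbox{inc}})^{\max}=\bar{\varphi}^{\max}\le L\varphi^{\max}$ yields the stated $O(t_{\bar{\varphi}^{\mbox{dom}}}t_{\bar{\varphi}}(1+\log_{K}\varphi^{\max})\log|\mbox{Inc}_{\bar{\varphi}}|)$. Finally I would check that the post-processing---forming $\mbox{dom}(W^{\mbox{inc}})$ by one $\bar{\varphi}^{\mbox{dom}}$-evaluation per element, padding, and querying $\bar{\varphi}$ at the $O(|W|)$ points to build the induced function---costs $O(|W^{\mbox{inc}}|(t_{\bar{\varphi}^{\mbox{dom}}}+t_{\bar{\varphi}}))$, which is dominated by the \textsc{ApxSet} term. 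Thus the proof is essentially an assembly of the earlier propositions, the only substantive care being this evaluation-time composition and the substitution of $\bar{\varphi}^{\max}$ (related to $\varphi^{\max}$ through $L$) for the maxima appearing in Proposition \ref{prop:4.6}.
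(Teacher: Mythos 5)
Your proof is correct and follows essentially the same route as the paper's: Proposition \ref{prop:4.6} to get that $W^{\mbox{inc}}$ is a $K$-approximation set of $\bar{\varphi}^{\mbox{inc}}$, Proposition \ref{prop:inc K apx} to transfer this to $W$ and $\bar{\varphi}$, Proposition \ref{prop:4.5} for monotonicity, and Proposition \ref{prop:6.2} (with $\varphi_{1}=\bar{\varphi},\varphi_{2}=\varphi,K_{1}=K,K_{2}=L$) for the $KL$ ratio. Your bookkeeping differences are cosmetic: you fold $t_{\bar{\varphi}^{\mbox{dom}}}+t_{\bar{\varphi}}$ into each \textsc{ApxSet} query, whereas the paper charges the $\bar{\varphi}^{\mbox{dom}}$ evaluations to a separate pass that builds $W$ from $W^{\mbox{inc}}$, and your explicit substitution $\bar{\varphi}^{\max}\leq L\varphi^{\max}$ addresses a step the paper silently glosses over.
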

\begin{proof}
By Proposition \ref{prop:4.6}, $W^{\mbox{inc}}$ is a $K-$approximation set of $\bar{\varphi}^{\mbox{inc}}.$
By Proposition \ref{prop:inc K apx} $W$ is a $K-$approximation set of $\bar{\varphi}.$ By Proposition \ref{prop:4.5}
$\hat{\bar{\varphi}}$ is nondecreasing.

Applying Proposition \ref{prop:6.2} (approximation of approximation sets) with $\varphi_{1}=\bar{\varphi},\,\varphi_{2}=\varphi,\, K_{1}=K,\, K_{2}=L,$
and $W_{1}=W,$ we get that $\hat{\bar{\varphi}}$ is a $KL-$approximation of $\varphi.$

Running time: By Proposition \ref{prop:4.6}, the computation of $W^{\mbox{inc}}$ takes $O\left(t_{\bar{\varphi}}\left(1+\log_{K}\varphi^{\max}\right)\log m\right)$
time, and $|W^{\mbox{inc}}|=O\left(1+\log_{K}\varphi^{\max}\right)$. Since building $W$ when $W^{\mbox{inc}}$
is on hand takes $O\left(t_{\bar{\varphi}^{\mbox{dom}}}\left(1+\log_{K}\varphi^{\max}\right)\right)$, then the
computation time of $W$ and $|W|$ are the same as for $W^{\mbox{inc}}$ and $|W^{\mbox{inc}}|$. 
\end{proof}

\subsection{Algorithm}

We first express the dynamic programming formula of the problem:
\begin{eqnarray*}
s_{0}\left(j\right)= & 1 & j\geq0\\
s_{i}\left(j\right)= & 0 & j<0\\
s_{i}\left(j\right)= & s_{i-1}\left(j\right)+s_{i-1}\left(j-w_{i}\right) & j\geq0,\, i\geq1
\end{eqnarray*}

\textbf{Intuition to Algorithm 4:} In every iteration $i,$ we have from the former iteration: (i) the function
$\hat{\bar{s}}_{i-1},$ which is a succinct approximation of $s_{i-1},$ and (ii) a set $W_{i-1}+\left\{ 1\right\} $
(recall that $A+B=\left\{ a+b|\, a\in A,b\in B\right\} $) which is the set of $\hat{\bar{s}}_{i-1}$'s strictly
increasing points (and maybe one additional point).

Our goal is to construct a succinct $K-$approximation function for $s_{i}.$ Apparently, it could be done easily
by iteratively constructing $K-$approximation sets for the functions $\bar{s}_{i}\left(\cdot\right)=\hat{\bar{s}}_{i-1}\left(\cdot\right)+\hat{\bar{s}}_{i-1}\left(\cdot-w_{i}\right),\,0\leq i\leq n.$
The running time of this procedure is $O\left(\frac{n^{3}}{\epsilon}\log\frac{n}{\epsilon}\log C\right),$ see Appendix
\ref{sec:Not-strongly-polynomial}. This means that the running time of the algorithm depends on $C-$ the capacity
of the knapsack, and the algorithm is therefore not strongly polynomial. 

In order to overcome this problem, we first develop a $K-$approximation set for $\bar{s}_{i}^{\mbox{inc}},$ that
does not depend on the numbers of the problem (an upper bound on the range of $\bar{s}_{i}^{\mbox{inc}}$ is $2^{n}$).
Then we ``convert'' it to a $K-$approximation set of the function $s_{i},$ as demonstrated in Proposition \ref{prop:inc K apx}.
This way, we get a $K-$approximation set for $s_{i},$ and the running time does not depend on the numbers of the
problem, so the algorithm is strongly polynomial.\\
\\
We give a remark on step 4(b) of Algorithm 4: $\mbox{Inc}_{i}$ stores in a sorted array of the form $\left\{ \left(j,x_{j}\right)|1\leq j\leq\left|\mbox{Inc}_{i}\right|\,,\, x_{j}\in\mbox{Inc}_{i}\right\} $.

\begin{algorithm}[h]
\protect\caption{Counting 0\textbackslash{}1 knapsack}

\begin{enumerate}
\item \textbf{Function }\textbf{\textsc{StrongFPTASKnapsack}}
\item $K\leftarrow\sqrt[n]{1+\epsilon}$ , $W_{0}\leftarrow\left\{ 0,C\right\} $
\item let $\hat{\bar{s}}_{0}\left(j\right)=1$ for all $0\leq j\leq C$
\item for $i:=1\mbox{ to }n$

\begin{enumerate}
\item let $\bar{s}_{i}\left(\cdot\right)=\hat{\bar{s}}_{i-1}\left(\cdot\right)+\hat{\bar{s}}_{i-1}\left(\cdot-w_{i}\right)$
\item $\mbox{Inc}_{i}\leftarrow\left(\left(W{}_{i-1}+\left\{ 1\right\} \right)\bigcup\left(W{}_{i-1}+\left\{ w_{i}+1\right\} \right)\bigcup\left\{ 0,C\right\} \right)\bigcap\left\{ 0,\ldots,C\right\} $ 
\item $W_{i}^{\mbox{inc}}\leftarrow$\textsc{ApxSet}$\left(\bar{s}_{i}^{\mbox{inc}}\left(\cdot\right),\left\{ 1,\ldots,\left|\mbox{Inc}_{i}\right|\right\} ,K\right)$\textbackslash{}{*}
$\bar{s}_{i}^{\mbox{inc}}$ as defined in Definition \ref{def:Phi-inc}{*}\textbackslash{}
\item $W_{i}\leftarrow$$\mbox{pad}\left(\mbox{dom}\left(W_{i}^{\mbox{inc}}\right)\right)$
\item let $\hat{\bar{s}}_{i}\left(\cdot\right)$ be the approximation of $\bar{s}_{i}\left(\cdot\right)$ induced by $W_{i}$
\end{enumerate}
\item end for
\item \textbf{return} $\hat{\bar{s}}_{n}\left(C\right)$\end{enumerate}
\end{algorithm}

\begin{prop}
\label{prop:Knapsack strongly}Let $0<\epsilon<1$, and let $w_{1},...,w_{n},C$ be an instance of a knapsack problem.
Then $\hat{\bar{s}}_{n}\left(C\right)$ calculated by Algorithm 4 is a $\left(1+\epsilon\right)-$approximation
function of $s_{n}\left(C\right)$. The algorithm is deterministic and runs in time $O\left(\frac{n^{3}}{\epsilon}\log^{2}\frac{n}{\epsilon}\right)$.\end{prop}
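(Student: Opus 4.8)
The plan is to establish correctness by induction on the loop index $i$, and then read off the running time from the size and time bounds of Proposition \ref{prop:knapsack}. For the induction I would carry a hypothesis with two clauses: (1) $\hat{\bar{s}}_{i}$ is a nondecreasing $K^{i}$-approximation function of $s_{i}$, and (2) the strictly increasing points of $\hat{\bar{s}}_{i}$ are contained in $W_{i}+\{1\}$. The base case $i=0$ is immediate: $\hat{\bar{s}}_{0}\equiv 1=s_{0}$ is exact (a $K^{0}$-approximation), it is constant and so has no strictly increasing points, and $W_{0}=\{0,C\}$, so both clauses hold.

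For the inductive step, first observe that $\bar{s}_{i}(\cdot)=\hat{\bar{s}}_{i-1}(\cdot)+\hat{\bar{s}}_{i-1}(\cdot-w_{i})$ is a sum of two nondecreasing functions, hence nondecreasing; and since by clause (1) each summand is a $K^{i-1}$-approximation of the corresponding term of the recursion $s_{i}(\cdot)=s_{i-1}(\cdot)+s_{i-1}(\cdot-w_{i})$, Proposition \ref{prop:5.1}(1) shows $\bar{s}_{i}$ is a $K^{i-1}$-approximation of $s_{i}$. The crucial step is to verify that $\mbox{Inc}_{i}\supseteq\mbox{StrictInc}_{\bar{s}_{i}}$, so that $\bar{s}_{i}^{\mbox{inc}}$ is well-defined and the call to \textsc{ApxSet} in step 4(c) is legitimate. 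This follows from clause (2) for $i-1$: a point where $\bar{s}_{i}$ jumps must be a jump of $\hat{\bar{s}}_{i-1}(\cdot)$ or of its shift $\hat{\bar{s}}_{i-1}(\cdot-w_{i})$, hence lies in $(W_{i-1}+\{1\})\cup(W_{i-1}+\{w_{i}+1\})$, which is exactly $\mbox{Inc}_{i}$ up to the endpoints and the intersection with the domain. With this in place, Proposition \ref{prop:knapsack} (applied with $\varphi=s_{i}$, $\bar{\varphi}=\bar{s}_{i}$, $L=K^{i-1}$, and ratio $K$) gives that $\hat{\bar{s}}_{i}$ is a nondecreasing $K\cdot K^{i-1}=K^{i}$-approximation of $s_{i}$, which is clause (1) for $i$. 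Clause (2) for $i$ holds because $\hat{\bar{s}}_{i}$ is the step function induced by $W_{i}$, whose value can only increase at the points $W_{i}+\{1\}$ (Definition \ref{def:K apx set}). Taking $i=n$ and recalling $K=\sqrt[n]{1+\epsilon}$ yields $K^{n}=1+\epsilon$, so $\hat{\bar{s}}_{n}(C)$ is a $(1+\epsilon)$-approximation of $s_{n}(C)$.

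For the running time I would first bound $\log_{K}s_{i}^{\max}$. Since $s_{i}(C)\le 2^{n}$ (there are at most $2^{n}$ subsets), $\log_{K}s_{i}^{\max}\le n\log_{K}2=n^{2}/\log_{2}(1+\epsilon)=O(n^{2}/\epsilon)$, using $\log_{2}(1+\epsilon)=\Omega(\epsilon)$ for $0<\epsilon<1$. By Proposition \ref{prop:knapsack} this also bounds $|W_{i}|$, so $|\mbox{Inc}_{i}|=O(|W_{i-1}|)=O(n^{2}/\epsilon)$ and $\log|\mbox{Inc}_{i}|=O(\log(n/\epsilon))$. Because $\hat{\bar{s}}_{i-1}$ is stored as a step function, Proposition \ref{prop:4.5} gives that evaluating it, and hence $\bar{s}_{i}$, costs $t_{\bar{s}_{i}}=O(\log|W_{i-1}|)=O(\log(n/\epsilon))$, while $\bar{s}_{i}^{\mbox{dom}}$ is answered in $t_{\bar{s}_{i}^{\mbox{dom}}}=O(1)$ by indexing into the sorted array storing $\mbox{Inc}_{i}$. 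Substituting into the per-iteration bound $O(t_{\bar{s}_{i}^{\mbox{dom}}}\,t_{\bar{s}_{i}}(1+\log_{K}s_{i}^{\max})\log|\mbox{Inc}_{i}|)$ of Proposition \ref{prop:knapsack} gives $O(\tfrac{n^{2}}{\epsilon}\log^{2}(n/\epsilon))$ per iteration; constructing $\mbox{Inc}_{i}$ by merging and sorting the two shifted copies of $W_{i-1}$ is of the same or smaller order. Summing over the $n$ iterations of the for-loop yields the claimed $O(\tfrac{n^{3}}{\epsilon}\log^{2}\tfrac{n}{\epsilon})$, and the algorithm is clearly deterministic.

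The step I expect to be the main obstacle is the bookkeeping in clause (2) of the induction, namely proving that the strictly increasing points of each approximate function are captured by the padded, shifted previous approximation set. This is precisely the observation that makes the algorithm strongly polynomial: it lets \textsc{ApxSet} run over the index domain $\{1,\ldots,|\mbox{Inc}_{i}|\}$, whose size is governed by $2^{n}$ rather than by $C$, and it is what must be threaded carefully through the induction so that the hypothesis $\mbox{Inc}_{i}\supseteq\mbox{StrictInc}_{\bar{s}_{i}}$ required by Proposition \ref{prop:knapsack} is met at every step.
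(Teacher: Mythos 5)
Your proposal is correct and follows essentially the same route as the paper's proof: the same induction carrying the $K^{i}$-approximation ratio together with the containment of strictly increasing points in $W_{i}+\left\{ 1\right\} $ (the paper states this as four properties, but its induction hypothesis likewise covers only these two clauses), the same invocations of Proposition \ref{prop:5.1}(1) and Proposition \ref{prop:knapsack} with $L=K^{i-1}$, and the same per-iteration running-time accounting via $\log_{K}2^{n}=O\left(n^{2}/\epsilon\right)$, $O(1)$ queries to $\bar{s}_{i}^{\mbox{dom}}$ from the sorted array, and logarithmic evaluation of the stored step functions.
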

\begin{proof}
\textbf{Correctness:} For every $i$ the function $\bar{s}_{i}^{\mbox{inc}}\left(\cdot\right)$ is nonnegative nondecreasing
as a sum of such functions. So the call to \textsc{ApxSet} is well defined.

Next, we prove that Algorithm 4 returns $\left(1+\epsilon\right)-$approximation solution. To do so, we show that
for every iteration $0\leq i\leq n$ the following 4 properties hold:

(1) $W{}_{i}$ is a $K-$approximation set of $\bar{s}_{i}\left(\cdot\right)$, and $W_{i}+\left\{ 1\right\} \supseteq\mbox{SrtictInc}_{\hat{\bar{s}}_{i}}$

(2) $\hat{\bar{s}}_{i}\left(\cdot\right)$ is a nondecreasing $K^{i}-$approximation function of $s_{i}$.

(3) $W_{i}^{\mbox{inc}}$ is a $K-$approximation set of $\bar{s}_{i}^{\mbox{inc}}\left(\cdot\right)$.

(4) $\mbox{Inc}_{i}\supseteq\mbox{StrictInc}_{\bar{s}_{i}\left(\cdot\right)}$.

We prove it by induction. The induction hypothesis is about properties (1) and (2) only.

The base case of $i=0$: $W_{0}$ is a $1-$approximation set of the function $s_{0}\left(j\right)\equiv1$, and
$\hat{\bar{s}}_{0}\left(\cdot\right)$ is the $1-$approximation of $s_{0}\left(\cdot\right)$ induced by $W_{0}$.

Our induction hypothesis is that properties (1)+(2)are valid for $i-1$. We now prove that all 4 properties are
valid for $i$.

We start with property (4). First, note that for any arbitrary nondecreasing step function $\varphi,$ by definition
of the function $\hat{\varphi}$ induced by a $K-$approximation set $\bar{W}$, $\bar{W}+\left\{ 1\right\} $ contains
the points where $\hat{\varphi}$ is strictly increasing. Therefore $W_{i-1}+\left\{ 1\right\} $ contains the points
where $\hat{\bar{s}}_{i-1}\left(\cdot\right)$ is strictly increasing, and $W_{i-1}+\left\{ w_{i}+1\right\} $ contain
the points where $\hat{\bar{s}}_{i-1}\left(\cdot-w_{i}\right)$ is strictly increasing. Second, note that for any
arbitrary step functions $\varphi$ and $\psi,$ $\varphi+\psi$ is strictly increasing in $i$, if and only if
$\varphi$ is strictly increasing in $i$ or $\psi$ is strictly increasing in $i.$ Thus, by Property (1) of the
induction hypothesis, $\mbox{Inc}_{i}$ includes the points where $\bar{s}_{i}$ is strictly increasing.

We proceed by proving the other 3 properties: property (3) is valid due to Proposition \ref{prop:4.6}  with parameters
set to $\varphi=\bar{s}^{\mbox{inc}},\, D=\left\{ 1,\ldots,\left|\mbox{Inc}_{i}\right|\right\} $ and $K=K.$ Property
(1) is valid by Proposition \ref{prop:inc K apx} with parameters set to $\varphi=s_{i}\left(\cdot\right),\, K=K\,,\, W^{\mbox{inc}}=W_{i}^{\mbox{inc}}$.
Property (2) derived from Proposition \ref{prop:knapsack} with parameters set to $\varphi=s_{i}\left(\cdot\right),\,\bar{\varphi}=\bar{s}_{i}\left(\cdot\right)\,,\, K=K\,,\, L=K^{i-1}$
(by Proposition \ref{prop:5.1}(1) (summation of approximation) and the dynamic formula, $\bar{s}_{i}\left(\cdot\right)$
is a $K^{i-1}-$approximation of $s_{i}\left(\cdot\right)$, using Property (2) of the induction hypothesis). This
completes the proof by induction.

Recall that $K=\sqrt[n]{1+\epsilon}$. We deduce from property (1) above with $i=n$, that for every $0\leq j\leq C$
we have $s_{n}\left(j\right)\leq\hat{\bar{s}}_{n}\left(j\right)\leq\left(\left(\sqrt[n]{1+\epsilon}\right)^{n}\right)s_{n}\left(j\right).$
When $j=C$, we therefore have $s_{n}\left(C\right)\leq\hat{\bar{s}}_{n}\left(C\right)\leq\left(1+\epsilon\right)s_{n}\left(C\right)$.
That proves the approximation ratio.

\textbf{Running time:} Clearly, the running time of the algorithm is dominated by the for-loop, which has $n$ iterations.
We first show that the running time of each iteration is dominated by step 4(c). In 4(b) we merge $W{}_{i-1}^{\mbox{}}+\left\{ 1\right\} $
and $W{}_{i-1}^{\mbox{}}+\left\{ w_{i}+1\right\} $ to a sorted array. Since the set $W_{i-1}$ is a $K-$approximation
set (of some function), then by definition it is sorted, and thus $W_{i-1}+\left\{ w_{i}\right\} $ is sorted too.
The cardinality of each is $O\left(|W{}_{i-1}^{\mbox{}}|\right),$ and by Proposition \ref{prop:knapsack}, $O\left(|W{}_{i-1}^{\mbox{}}|\right)=O\left(1+\log_{K}2^{n}\right)$,
so the merge operation takes $O\left(1+\log_{K}2^{n}\right)$ time. Step 4(d) runs in $O\left(|W{}_{i}^{\mbox{inc}}|\right)=O\left(1+\log_{K}2^{n}\right)$:
since $\mbox{Inc}_{i}$ is stored in a sorted array of the form $\left\{ \left(j,x_{j}\right)|1\leq j\leq\left|\mbox{Inc}_{i}\right|\,,\, x_{j}\in\mbox{Inc}_{i}\right\} $
(step 4(b)), the operation of $\varphi^{\mbox{dom}}\left(w\right)$ takes $O\left(1\right)$ time for every $w\in W_{i}^{\mbox{inc}}$.
Therefore the running time of each iteration is dominated by the call to \textsc{ApxSet} in 4(c). 

Since $O\left(|\mbox{Inc}_{i}|\right)=O\left(1+\log_{K}2^{n}\right)$ by Proposition \ref{prop:knapsack} , the
running time of \textsc{ApxSet} is $O\left(t_{\bar{s}_{i}\left(\cdot\right)}\log_{K}2^{n}\log\log_{K}2^{n}\right)$
($2^{n}$ is an upper bound for $s_{n}\left(\cdot\right)$). $\hat{\bar{s}}_{i-1}\left(\cdot\right)$ is a function
induced by a $K-$approximation set, and by Proposition \ref{prop:knapsack} it can be saved efficiently, so $t_{\bar{s}_{i}\left(\cdot\right)}=O\left(\log\log_{K}2^{n}\right)$.
We can rely on the fact $O\left(\log_{K}2^{n}\right)=O\left(\frac{n\log2^{n}}{\log\left(1+\epsilon\right)}\right)=O\left(\frac{n^{2}}{\epsilon}\right)$,
which holds true for every $0\leq\epsilon\leq1,$ and thereby conclude that the running time is $O\left(\frac{n^{3}}{\epsilon}\log^{2}\frac{n}{\epsilon}\right)$. 
\end{proof}
The resultant running time is, up to log term, similar to \cite{key-6}'s running time $O\left(\frac{n^{3}}{\epsilon}\log\frac{n}{\epsilon}\right)$.
Both running times are strongly polynomial. \cite{key-6}'s method is specific to the 0/1 knapsack problem, whereas
our's is a general method.

\section{Strongly polynomial algorithm for counting $m-$tuples \label{sec:Strongly-polynomial-m-tuples}}

The DP formulation for solving the problem of counting $m-$tuples have similar structure as the one of counting
0/1 knapsack. In addition, the solutions of both problems have an upper bound that is independent on the number
of the problem (in knapsack- $2^{n}$, and in $m-$tuples- $\prod_{i=1}^{m}\left|X_{i}\right|$). These similar
properties of both problems enable us to use the method developed in Section \ref{sec:0/1-knapsack--strongly} and
apply it to the problem of counting $m-$tuples.

\begin{algorithm}[h]
\protect\caption{Counting $m-$tuples by a strongly polynomial algorithm}

\begin{enumerate}
\item \textbf{Function }\textbf{\textsc{StrongFPTASMtuples}}
\item $K\leftarrow\sqrt[m]{1+\epsilon}$ , $\mbox{Inc}_{1}=\left\{ 0,x_{11},\ldots,x_{1\left|X_{1}\right|},B\right\} $ 
\item $W_{1}^{\mbox{Inc}}\leftarrow$\textsc{ApxSet}$\left(z_{1}^{\mbox{inc}}\left(\cdot\right),\left\{ 1,\ldots,\left|\mbox{Inc}_{1}\right|\right\} ,K\right)$,
$W_{1}\leftarrow$$\mbox{pad}\left(\mbox{dom}\left(W_{1}^{\mbox{inc}}\right)\right)$
\item let $\hat{\bar{z}}_{1}\left(j\right)$be the approximation of $\bar{z}_{1}\left(\cdot\right)$ induced by $W_{1}$
\item for $i:=2\mbox{ to }m$

\begin{enumerate}
\item let $\bar{z}_{i}\left(j\right)=\sum_{k=1}^{|X_{i}|}\hat{\bar{z}}_{i-1}\left(j-x_{ik}\right)$
\item $\mbox{Inc}_{i}\leftarrow\left(\bigcup_{j=1}^{\left|X_{i}\right|}\left(W{}_{i-1}+\left\{ x_{ij}+1\right\} \right)\bigcup\left\{ 0,C\right\} \right)\bigcap\left\{ 0,\ldots,C\right\} $
as a sorted array
\item $W_{i}^{\mbox{inc}}\leftarrow$\textsc{ApxSet}$\left(\bar{z}_{i}^{\mbox{inc}}\left(\cdot\right),\left\{ 1,\ldots,\left|\mbox{Inc}_{i}\right|\right\} ,K\right)$\textbackslash{}{*}
$\bar{z}_{i}^{\mbox{inc}}$ as defined in Definition \ref{def:Phi-inc}{*}\textbackslash{}
\item $W_{i}\leftarrow$$\mbox{pad}\left(\mbox{dom}\left(W_{i}^{\mbox{inc}}\right)\right)$
\item let $\hat{\bar{z}}_{i}\left(\cdot\right)$ be the approximation of $\bar{z}_{i}\left(\cdot\right)$ induced by $W_{i}$
\end{enumerate}
\item end for
\item \textbf{return} $\hat{\bar{z}}_{m}\left(B\right)$\end{enumerate}
\end{algorithm}

\begin{prop}
Let $0<\epsilon<1$, and let $X_{1},\ldots,X_{m},B$ be an instance of a $m-$tuples problem. Then $\hat{\bar{z}}_{m}\left(B\right)$
calculated by Algorithm 5 is a $\left(1+\epsilon\right)-$approximation function of $z_{m}\left(B\right)$. The
algorithm is deterministic and runs in time 
\[
O\left(\frac{m^{3}}{\epsilon}\left(\sum_{i=1}^{m}\left|X_{i}\right|\right)\log\left(\prod_{i=1}^{m}\left|X_{i}\right|\right)\log\frac{m\log\prod_{i=1}^{m}\left|X_{i}\right|}{\epsilon}\log\left(\frac{m\prod_{i=1}^{m}\left|X_{i}\right|\log\prod_{i=1}^{m}\left|X_{i}\right|}{\epsilon}\right)\right)
\]
.\end{prop}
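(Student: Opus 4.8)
The plan is to mirror the proof of Proposition \ref{prop:Knapsack strongly} almost verbatim, since Algorithm 5 is the $m$-tuples analogue of Algorithm 4 under the substitutions $n\mapsto m$, $s_i\mapsto z_i$, the range bound $2^n\mapsto\prod_{i=1}^m|X_i|$, and $C\mapsto B$. The one structural difference to keep in mind throughout is orientation: the functions $z_i$ (and hence $\bar z_i$, $\hat{\bar z}_i$) are nonincreasing rather than nondecreasing, so I would invoke the nonincreasing versions of Definition \ref{def:Phi-inc} and of Propositions \ref{prop:inc K apx} and \ref{prop:knapsack} (replacing ``strictly increasing points'' by ``strictly decreasing points''), exactly as was already done for the method in Section \ref{sec:Counting-tuples}. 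With that convention the call to \textsc{ApxSet} on the nonincreasing (hence well-defined) function $\bar z_i^{\mbox{inc}}$ in step 5(c) is legitimate.

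For \textbf{correctness} I would fix $K=\sqrt[m]{1+\epsilon}$ and prove by induction on $i$ the four properties used in Proposition \ref{prop:Knapsack strongly}: (1) $W_i$ is a $K$-approximation set of $\bar z_i$, and $W_i$ together with the appropriate unit shift contains the strict-decrease points of $\hat{\bar z}_i$; (2) $\hat{\bar z}_i$ is a nonincreasing $K^i$-approximation of $z_i$; (3) $W_i^{\mbox{inc}}$ is a $K$-approximation set of $\bar z_i^{\mbox{inc}}$; (4) $\mbox{Inc}_i\supseteq\mbox{StrictDec}_{\bar z_i}$. The base case $i=1$ is immediate from the definition of $z_1$ together with Propositions \ref{prop:4.6} and \ref{prop:inc K apx}. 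For the inductive step I would first establish (4): since each summand $\hat{\bar z}_{i-1}(\cdot-x_{ik})$ is a step function whose decrease points are those of $\hat{\bar z}_{i-1}$ shifted by $x_{ik}$, and a sum of step functions decreases at a point iff at least one summand does, property (1) of the induction hypothesis shows that $\bigcup_{k}(W_{i-1}+\{x_{ik}+1\})$ — which is exactly $\mbox{Inc}_i$ up to the boundary points — captures every strict-decrease point of $\bar z_i$. Properties (3), (1), (2) then follow in turn from Proposition \ref{prop:4.6} (on the domain $\{1,\ldots,|\mbox{Inc}_i|\}$), Proposition \ref{prop:inc K apx}, and Proposition \ref{prop:knapsack} with $L=K^{i-1}$, where the hypothesis that $\bar z_i$ is a $K^{i-1}$-approximation of $z_i$ is supplied by Proposition \ref{prop:m-tuples struct} together with property (2) for $i-1$. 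Taking $i=m$ and using $K^m=1+\epsilon$ gives $z_m(B)\le\hat{\bar z}_m(B)\le(1+\epsilon)z_m(B)$.

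For the \textbf{running time} I would argue as in Proposition \ref{prop:Knapsack strongly} that the cost is dominated by the $m$ calls to \textsc{ApxSet} in step 5(c), since $\mbox{Inc}_i$ is maintained as a sorted array so that $\bar z_i^{\mbox{dom}}$ is evaluated in $O(1)$ time and the merges in step 5(b) and the padding in step 5(d) are cheaper. Writing $P=\prod_{i=1}^m|X_i|$ (an upper bound on $z_i$) and $L=\log_K P=O(m\log P/\epsilon)$, each evaluation of $\bar z_i$ costs $O(|X_i|\log L)$ because it sums $|X_i|$ queries to the succinctly stored $\hat{\bar z}_{i-1}$, while $|\mbox{Inc}_i|=O(|X_i|L)$ because it is a union of $|X_i|$ shifted copies of $W_{i-1}$. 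Feeding these into the bound of Proposition \ref{prop:knapsack} and summing over $i$ — pulling out the $i$-independent factors, absorbing the iteration count into $\sum_i|X_i|$, and bounding $|X_i|\le P$ inside the $\log|\mbox{Inc}_i|$ term — yields the stated running time.

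The \textbf{main obstacle}, and the only place where the $m$-tuples argument genuinely departs from the knapsack one, is property (4): one must verify that the coarse set $\mbox{Inc}_i$ built from $|X_i|$ shifted copies of $W_{i-1}$ really does contain all strict-decrease points of $\bar z_i$, for otherwise $\bar z_i^{\mbox{inc}}$ would be computed on an incomplete index set and Proposition \ref{prop:inc K apx} could not be applied. Everything else is bookkeeping: propagating the nonincreasing orientation consistently, and accounting for the extra $\sum_i|X_i|$ factor and the enlarged $|\mbox{Inc}_i|$ in the running-time estimate relative to the two-term knapsack recursion.
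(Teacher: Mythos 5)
Your proposal is correct and takes essentially the same approach as the paper: the paper's own proof simply declares the correctness argument a near-verbatim copy of that of Proposition \ref{prop:Knapsack strongly} and omits it, then performs exactly the running-time bookkeeping you describe (the call to \textsc{ApxSet} dominating each iteration, $\left|\mbox{Inc}_{i}\right|=O\left(\left|X_{i}\right|\log_{K}\prod_{i=1}^{m}\left|X_{i}\right|\right)$, $t_{\bar{z}_{i}}=O\left(\left|X_{i}\right|\log\log_{K}\prod_{i=1}^{m}\left|X_{i}\right|\right)$, and bounding $\left|X_{i}\right|$ by $\prod_{i=1}^{m}\left|X_{i}\right|$ inside the logarithmic term). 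If anything you are more explicit than the paper, which dismisses the nonincreasing-orientation adaptation and the four-property induction as ``minor changes,'' and your cleaner summation over iterations in fact yields a bound tighter than the stated $\frac{m^{3}}{\epsilon}$ factor (which the paper obtains by looser accounting), so the claimed $O(\cdot)$ bound certainly follows.
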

\begin{proof}
The proof of correctness and the approximation ratio is very similar to the proof of Proposition \ref{prop:Knapsack strongly}.
The changes are minor, and therefore we omit this part of the proof.

\textbf{Running time:} Clearly, the running time of the algorithm is dominated by the for-loop, which has $m-1$
iterations (step3 runs in similar time to the other iterations in the for-loop) . We first show that the running
time of each iteration is dominated by step 5(c). In 5(b) we merge $W{}_{i-1}^{\mbox{}}+\left\{ 1\right\} .\ldots,W_{i-1}+\left\{ x_{1\left|X_{1}\right|}+1\right\} $
to a sorted array. Since the set $W_{i-1}$ is a $K-$approximation set (of some function), then by definition it
is sorted, and thus $W_{i-1}+\left\{ x_{ij}\right\} $ is sorted for any $j$. The cardinality of each is $O\left(|W{}_{i-1}^{\mbox{}}|\right),$
and by Proposition \ref{prop:knapsack}, $O\left(|W{}_{i-1}^{\mbox{}}|\right)=O\left(1+\log_{K}\prod_{i=1}^{m}\left|X_{i}\right|\right)$,
so the merge operation takes $O\left(1+\log_{K}\prod_{i=1}^{m}\left|X_{i}\right|\right)$. Since step 5(d) runs
in $O\left(|W{}_{i}^{\mbox{inc}}|\right)=O\left(1+\log_{K}\prod_{i=1}^{m}\left|X_{i}\right|\right)$, the running
time of each iteration is dominated by the call to \textsc{ApxSet} in 5(c). 

Since $O\left(|\mbox{Inc}_{i}|\right)=O\left(\left|X_{i}\right|\log_{K}\prod_{i=1}^{m}\left|X_{i}\right|\right)$
by Proposition \ref{prop:knapsack} , the running time of \textsc{ApxSet} in the $i$'th iteration is $O\left(t_{\bar{z}_{i}\left(\cdot\right)}\log_{K}\prod_{i=1}^{m}\left|X_{i}\right|\log\left(\left|X_{i}\right|\log_{K}\prod_{i=1}^{m}\left|X_{i}\right|\right)\right)$
($\prod_{i=1}^{m}\left|X_{i}\right|$ is an upper bound for $z_{m}\left(\cdot\right)$). $\hat{\bar{z}}_{i-1}\left(\cdot\right)$
is a function induced by a $K-$approximation set, and by Proposition \ref{prop:knapsack} it can be stored efficiently,
so $t_{\hat{\bar{z}}_{i}\left(\cdot\right)}=O\left(\log\log_{K}\prod_{i=1}^{m}\left|X_{i}\right|\right)$ thus by
definition $t_{\bar{z}_{i}\left(\cdot\right)}=O\left(\left|X_{i}\right|\log\log_{K}\prod_{i=1}^{m}\left|X_{i}\right|\right)$
. We can rely on the fact $O\left(\log_{K}\prod_{i=1}^{m}\left|X_{i}\right|\right)=O\left(\frac{m\log\prod_{i=1}^{m}\left|X_{i}\right|}{\log\left(1+\epsilon\right)}\right)=O\left(\frac{m\log\prod_{i=1}^{m}\left|X_{i}\right|}{\epsilon}\right)$,
which holds true for every $0\leq\epsilon\leq1,$ and thereby conclude that the running time is 
\begin{eqnarray*}
O\left(\frac{m^{2}}{\epsilon}\left(\sum_{i=1}^{m}\left|X_{i}\right|\right)\log\left(\prod_{i=1}^{m}\left|X_{i}\right|\right)\log\frac{m\log\prod_{i=1}^{m}\left|X_{i}\right|}{\epsilon}\sum_{i=1}^{m}\log\left(\frac{m\left|X_{i}\right|\log\prod_{i=1}^{m}\left|X_{i}\right|}{\epsilon}\right)\right)\\
=O\left(\frac{m^{3}}{\epsilon}\left(\sum_{i=1}^{m}\left|X_{i}\right|\right)\log\left(\prod_{i=1}^{m}\left|X_{i}\right|\right)\log\frac{m\log\prod_{i=1}^{m}\left|X_{i}\right|}{\epsilon}\log\left(\frac{m\left(\prod_{i=1}^{m}\left|X_{i}\right|\right)\log\prod_{i=1}^{m}\left|X_{i}\right|}{\epsilon}\right)\right)
\end{eqnarray*}
. 
\end{proof}
Comparing to the not strongly polynomial algorithm in Algorithm 2, for achieving the property of strongly polynomial
running time, wetrade off the term $\log B$ with the term $m\log\left(\frac{m\left(\prod_{i=1}^{m}\left|X_{i}\right|\right)\log\prod_{i=1}^{m}\left|X_{i}\right|}{\epsilon}\right)$
in the running time of the algorithm.

\appendix

\section{ApxSet for nonincreasing function\label{sec:-ApxSet-for}}

\begin{algorithm}[h]
\protect\caption{\label{alg:ApxSet-1}Constructing a $K-$approximation set for a nonincreasing function $\varphi$}

\begin{enumerate}
\item \textbf{Function }\textbf{\textsc{ApxSet}}$\left(\varphi,\left\{ A,\ldots,B\right\} ,K\right)$
\item $x\leftarrow A$
\item $W\leftarrow\left\{ A,B\right\} $
\item while $x<B$ do

\begin{enumerate}
\item $x\leftarrow\min\left\{ x+1,\min\left\{ y\in\left\{ A,\ldots,B\right\} |K\varphi\left(y\right)\geq\varphi\left(x\right)\right\} \right\} $
\item $W\leftarrow W\bigcup\left\{ x\right\} $
\end{enumerate}
\item end while
\item \textbf{return} $W$\end{enumerate}
\end{algorithm}

\section{Not strongly polynomial algorithm for counting 0/1 knapsack solutions\label{sec:Not-strongly-polynomial}}

\begin{algorithm}[h]
\protect\caption{The not strongly polynomial algorithm for counting 0\textbackslash{}1 knapsack solutions}

\begin{enumerate}
\item \textbf{Function}\textbf{\textsc{ FPTASKnapsack}}
\item K$\leftarrow$$\sqrt[{\normalcolor n}]{{\normalcolor 1+\epsilon}}$, $W_{0}\leftarrow\left\{ 0,C\right\} $
\item let $\hat{\bar{s}}_{0}\left(j\right)=1$ for all $0\leq j\leq C$
\item for i:=1 to $n$ 

\begin{enumerate}
\item let $\bar{s}_{i}\left(\cdot\right)=\hat{\bar{s}}_{i-1}\left(\cdot\right)+\hat{\bar{s}}_{i-1}\left(\cdot-w_{i}\right)$
\item $W_{i}\leftarrow\mathbf{ApxSet}$$\left(\bar{s}_{i}\left(\cdot\right),\left\{ 0,...,C\right\} ,K\right)$
\item let $\hat{\bar{s}}_{i}\left(\cdot\right)$ be the approximation of $\bar{s}_{i}\left(\cdot\right)$ induced by $W_{i}$
\end{enumerate}
\item end for
\item \textbf{return} $\hat{\bar{s}}_{n}\left(C\right)$\end{enumerate}
\end{algorithm}

We show that algorithm 7 is a not strongly polynomial algorithm to approximate the number of 0/1 knapsack solutions. 

The correctness of the algorithm and the approximation ratio follow by arguments similar to those of Algorithm 2.
We next analyze the running time of the algorithm: Clearly, the running time of the algorithm is dominated by the
for-loop, which has $n$ iterations. In each iteration, the running time is dominated by the execution of function
\textsc{ApxSet} in step 4(b). By Proposition \ref{prop:4.6}, the running time is $O\left(t_{\bar{s}_{i}\left(\cdot\right)}\log_{K}2^{n}\log C\right)$
(Note that $2^{n}$ is an upper bound on the number of knapsack solutions). As a function induced by a $K-$approximation
set, $\tilde{s}_{i-1}\left(\cdot\right)$ can be stored efficiently, so by Proposition \ref{prop:4.6} $t_{\bar{s}_{i}\left(\cdot\right)}=\log_{K}2^{n}$.
So the running time is $O\left(\log_{K}2^{n}\log\log_{K}2^{n}\log C\right)$. We can rely on the fact $O\left(\log_{K}2^{n}\right)=O\left(\frac{n\log2^{n}}{\log\left(1+\epsilon\right)}\right)=O\left(\frac{n^{2}}{\epsilon}\right)$,
which holds true for every $0\leq\epsilon\leq1,$ and thereby conclude that the running time is $O\left(\frac{n^{3}}{\epsilon}\log\frac{n}{\epsilon}\log C\right)$.

\end{document}